\titleformat{\section}[hang]{\Large\bfseries\filcenter}{}{1em}{}
\titleformat{\subsection}[hang]{\bfseries}{}{1em}{}
\newtheorem{conjecture}{Conjecture}
\newtheorem{definition}{Definition}
\newtheorem{lemma}{Lemma}
\newtheorem{theorem}{Theorem}
\newtheorem{proposition}{Proposition}
\newcommand{\bra}[1]{\langle{#1}|}
\newcommand{\ket}[1]{|{#1}\rangle}
\newcommand{\abs}[1]{\left\lvert {#1} \right\rvert}
\def\bma{\begin{bmatrix}}
\def\ema{\end{bmatrix}}
\def\rank{\mathop{\rm rank}}
\def\diag{\mathop{\rm diag}}
\def\tr{{\rm Tr}}
\def\dg{\dagger}
\def\dim{\mathop{\rm Dim}}
\def\ox{\otimes}
\def\l{\lambda}
\def\m{\mu}
\def\p{\pi}
\def\r{\rho}
\def\ps{\psi}
\def\G{\Gamma}
\def\L{\Lambda}
\newcommand{\nc}{\newcommand}
 \nc{\bbA}{\mathbb{A}} \nc{\bbB}{\mathbb{B}} \nc{\bbC}{\mathbb{C}}
 \nc{\bbD}{\mathbb{D}} \nc{\bbE}{\mathbb{E}} \nc{\bbF}{\mathbb{F}}
 \nc{\bbG}{\mathbb{G}} \nc{\bbH}{\mathbb{H}} \nc{\bbI}{\mathbb{I}}
 \nc{\bbJ}{\mathbb{J}} \nc{\bbK}{\mathbb{K}} \nc{\bbL}{\mathbb{L}}
 \nc{\bbM}{\mathbb{M}} \nc{\bbN}{\mathbb{N}} \nc{\bbO}{\mathbb{O}}
 \nc{\bbP}{\mathbb{P}} \nc{\bbQ}{\mathbb{Q}} \nc{\bbR}{\mathbb{R}}
 \nc{\bbS}{\mathbb{S}} \nc{\bbT}{\mathbb{T}} \nc{\bbU}{\mathbb{U}}
 \nc{\bbV}{\mathbb{V}} \nc{\bbW}{\mathbb{W}} \nc{\bbX}{\mathbb{X}}
 \nc{\bbZ}{\mathbb{Z}}
 \nc{\bA}{{\bf A}} \nc{\bB}{{\bf B}} \nc{\bC}{{\bf C}}
 \nc{\bD}{{\bf D}} \nc{\bE}{{\bf E}} \nc{\bF}{{\bf F}}
 \nc{\bG}{{\bf G}} \nc{\bH}{{\bf H}} \nc{\bI}{{\bf I}}
 \nc{\bJ}{{\bf J}} \nc{\bK}{{\bf K}} \nc{\bL}{{\bf L}}
 \nc{\bM}{{\bf M}} \nc{\bN}{{\bf N}} \nc{\bO}{{\bf O}}
 \nc{\bP}{{\bf P}} \nc{\bQ}{{\bf Q}} \nc{\bR}{{\bf R}}
 \nc{\bS}{{\bf S}} \nc{\bT}{{\bf T}} \nc{\bU}{{\bf U}}
 \nc{\bV}{{\bf V}} \nc{\bW}{{\bf W}} \nc{\bX}{{\bf X}}
 \nc{\bZ}{{\bf Z}}
\nc{\cA}{{\cal A}} \nc{\cB}{{\cal B}} \nc{\cC}{{\cal C}}
\nc{\cD}{{\cal D}} \nc{\cE}{{\cal E}} \nc{\cF}{{\cal F}}
\nc{\cG}{{\cal G}} \nc{\cH}{{\cal H}} \nc{\cI}{{\cal I}}
\nc{\cJ}{{\cal J}} \nc{\cK}{{\cal K}} \nc{\cL}{{\cal L}}
\nc{\cM}{{\cal M}} \nc{\cN}{{\cal N}} \nc{\cO}{{\cal O}}
\nc{\cP}{{\cal P}} \nc{\cQ}{{\cal Q}} \nc{\cR}{{\cal R}}
\nc{\cS}{{\cal S}} \nc{\cT}{{\cal T}} \nc{\cU}{{\cal U}}
\nc{\cV}{{\cal V}} \nc{\cW}{{\cal W}} \nc{\cX}{{\cal X}}
\nc{\cZ}{{\cal Z}}
\nc{\cnn}{{\cal NN}}
\begin{document}

\title{On a matrix inequality related to the distillability problem}

\author{Yi Shen}\email[]{sy1609134@buaa.edu.cn}
\affiliation{School of Mathematics and Systems Science, Beihang University, Beijing 100191, China}

\author{Lin Chen}\email[]{linchen@buaa.edu.cn (corresponding author)}
\affiliation{School of Mathematics and Systems Science, Beihang University, Beijing 100191, China}
\affiliation{International Research Institute for Multidisciplinary Science, Beihang University, Beijing 100191, China}

\begin{abstract}
We investigate the distillability problem in quantum information in $\bbC^d\ox\bbC^d$. A special case of the problem has been reduced to proving a matrix inequality when $d=4$. We investigate the inequality for two families of non-normal matrices.
We prove the inequality for the first family with $d=4$ and two special cases of the second family with $d\ge4$. We also prove the inequality for all normal matrices with $d>4$.
\end{abstract}

\date{\today}

\maketitle


\section{Introduction}
\label{sec:int}

We refer to $\mathbb{C}^{n\times n}$ as the set of $n\times n$ matrices with entries in the complex field, and $\mathbb{H}^{n\times n}$ as the set of $n\times n$ Hermitian matrices. 
Let $I_n$ be the identity matrix in $\mathbb{C}^{n\times n}$. We shall omit the subscript of the identity matrix when it is clear in the paper. Let $A\in\bbC^{n\times n}$ and $B\in\bbC^{m\times m}$. The Kronecker sum of $A$ and $B$ is defined as $A\ox I_m+I_n\ox B$, see more facts in \cite[section 7.2]{bernstein2009matrix}. 
Ref. \cite{pphh2010} has presented the following conjecture on the Kronecker sum when $A$ and $B$ have the same size. 

\begin{conjecture}
\label{cj:main}
Let $A,B,I\in\bbC^{d\times d}$, $d\geq4$, and the matrix
\begin{equation}                                                          
\begin{aligned}
\label{eq:x}
X=A\otimes I+I\otimes B
\end{aligned}
\end{equation}
where
\begin{equation}                                                          
\begin{aligned}
\label{eq:tra}
\tr A=\tr B=0,\quad \tr A^\dagger A+\tr B^\dagger B=\frac{1}{d}.
\end{aligned}
\end{equation}
Let $\sigma_1,\cdots,\sigma_{d^2}$ be the singular values of $X$ in the descending order.
Then
\begin{equation}
\begin{aligned}
\label{eq3}
\sup_{X}
~(\sigma_1^2+\sigma_2^2)\leq\frac{1}{2}.
\end{aligned}
\end{equation}
\qed
\end{conjecture}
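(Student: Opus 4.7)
The first move is a trace identity. Using $\tr A=\tr B=0$, one computes
\begin{equation*}
\tr(X^\dagger X)=d\,\tr(A^\dagger A)+d\,\tr(B^\dagger B)+(\tr A^\dagger)(\tr B)+(\tr A)(\tr B^\dagger)=d\bigl(\tr A^\dagger A+\tr B^\dagger B\bigr)=1,
\end{equation*}
so $\sum_{k=1}^{d^2}\sigma_k^2=1$ and \eqref{eq3} asserts that the top two squared singular values hold at most half of this total Frobenius mass---a ``no over-concentration'' statement.

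I would first dispose of the normal case. When $A,B$ are normal, conjugating each independently to diagonal form transforms $X$ into $\diag(\alpha_i+\beta_j)$, so its singular values are exactly the $d^2$ numbers $|\alpha_i+\beta_j|$ arranged in a $d\times d$ array. Write $a=\tr A^\dagger A$ and $b=\tr B^\dagger B$ (so $a+b=1/d$), and case-split on whether the two top positions $(i_1,j_1),(i_2,j_2)$ lie in disjoint rows and columns, or share one coordinate. In the disjoint case, view the pair as the sum $(\alpha_{i_1},\alpha_{i_2})+(\beta_{j_1},\beta_{j_2})\in\mathbb{C}^2$ and apply the triangle inequality followed by $(x+y)^2\le 2(x^2+y^2)$:
\begin{equation*}
\sigma_1^2+\sigma_2^2\le\bigl(\sqrt{|\alpha_{i_1}|^2+|\alpha_{i_2}|^2}+\sqrt{|\beta_{j_1}|^2+|\beta_{j_2}|^2}\bigr)^2\le 2(a+b)=\frac{2}{d}\le\frac{1}{2},
\end{equation*}
the chain being tight only at $d=4$ and reproducing the extremal example $A=B=\tfrac14\diag(1,-1,0,0)$.

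The shared-coordinate subcase (say $(i,j_1),(i,j_2)$, same row) is more delicate but still tractable. I would use the parallelogram identity to write $\sigma_1^2+\sigma_2^2=2|\alpha_i+w|^2+2|z|^2$ with $w=(\beta_{j_1}+\beta_{j_2})/2$, $z=(\beta_{j_1}-\beta_{j_2})/2$, together with the traceless-sphere bound $|\alpha_i|^2\le\tfrac{d-1}{d}a$ and, from $\sum_j\beta_j=0$ plus Cauchy--Schwarz on $\sum_{j>2}|\beta_j|^2\ge 4|w|^2/(d-2)$, the elliptical constraint $\tfrac{d}{d-2}|w|^2+|z|^2\le\tfrac b2$. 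Maximising $2|\alpha_i+w|^2+2|z|^2$ under these two constraints and $a+b=1/d$ is a two-variable optimisation whose critical analysis shows the optimum is attained with $z=0$ and $\alpha_i,w$ real-aligned; the identity $K:=8(d-1)(d-2)+d^2=(3d-4)^2$ then collapses the answer to $\sigma_1^2+\sigma_2^2\le(3d-4)/d^2\le 1/2$ for $d\ge 4$ (since $(d-2)(d-4)\ge 0$), with equality again only at $d=4$, attained at $\alpha=\tfrac18(3,-1,-1,-1)$, $\beta=\tfrac18(1,1,-1,-1)$.

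The main obstacle is the non-normal case, where $\alpha_i,\beta_j$ and the eigenvalue-array picture are gone. Here my plan is to recast the problem variationally,
\begin{equation*}
\sigma_1^2+\sigma_2^2=\max\Bigl\{\|AM_1+M_1B^T\|_F^2+\|AM_2+M_2B^T\|_F^2:\ M_1,M_2\in\mathbb{C}^{d\times d},\ \tr(M_k^\dagger M_\ell)=\delta_{k\ell}\Bigr\},
\end{equation*}
and attempt to show that at an optimal pair $(M_1,M_2)$ one may replace $A,B$ with normal matrices of the same Frobenius norms and zero traces without decreasing the objective---which would reduce everything to the two cases above. Failing a general reduction, I would attack specific structured families (nilpotent or rank-one perturbations of the tight normal extremals, triangular/Schur forms) where the cross term $A^\dagger\otimes B+A\otimes B^\dagger$ in $X^\dagger X$ still admits closed-form analysis. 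I expect this step to be the principal difficulty: it is precisely the non-normality that allows the cross term to reinforce the diagonal part in ways invisible to the trace and norm constraints alone, and controlling it in full generality appears to demand genuinely new structural input beyond \eqref{eq:tra}.
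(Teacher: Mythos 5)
Your treatment of the normal case is correct and is essentially the paper's own argument (Lemma \ref{le:d=4}, proved in Appendix \ref{app:lemma1}, and its extension in Lemma \ref{le:d>4}): diagonalize $A$ and $B$, reduce to the two inequalities $\abs{a_1+b_1}^2+\abs{a_2+b_2}^2\le 1/2$ and $\abs{a_1+b_1}^2+\abs{a_1+b_2}^2\le 1/2$, handle the first by $\abs{x+y}^2\le 2(\abs{x}^2+\abs{y}^2)$ and the second by the bound $(3d-4)/d^2$. In fact you go slightly further than the paper here, since you sketch a proof of the $(3d-4)/d^2$ maximization (Proposition \ref{pp:ab}), which the paper only cites from \cite{pphh2010}; your constraints $\abs{\alpha_i}^2\le\frac{d-1}{d}a$ and $\frac{d}{d-2}\abs{w}^2+\abs{z}^2\le\frac{b}{2}$ and the identity $8(d-1)(d-2)+d^2=(3d-4)^2$ all check out, and your extremal points match the known saturating examples.

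However, the statement you were asked to prove is the full conjecture, and your proposal does not prove it: the non-normal case is left as a program, not an argument. The key step you would need --- that at an optimal orthonormal pair $(M_1,M_2)$ in your variational formula one may replace $A,B$ by normal matrices with the same traces and Frobenius norms without decreasing $\sigma_1^2+\sigma_2^2$ --- is asserted as a hope and given no mechanism; if such a reduction were available, the conjecture would be a theorem, whereas it is in fact open (the paper itself only establishes it for two structured families of non-normal matrices, in Theorem \ref{thm:nonnormal} and Lemmas \ref{le:ra1}, \ref{le:ada}, and for all normal $X$). To your credit you identify exactly the right obstruction --- the cross term $A^\dagger\ox B+A\ox B^\dagger$ in $X^\dagger X$ escaping control by the trace and norm constraints --- and your fallback of attacking structured families (triangular/Schur forms, perturbations of the normal extremals) is precisely the strategy the paper pursues. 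But as a proof of the conjecture as stated, the proposal has a genuine and acknowledged gap at its central point.
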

The condition $d\ge4$ is essential, because we will show in Lemma \ref{le:d=3} that Conjecture \ref{cj:main} fails for $d=3$. It has been shown \cite{pphh2010} that Conjecture \ref{cj:main} for $d=4$ is a special case of the distillability problem. We will mathematically explain the special case in  Appendix \ref{app:dis}, due to the heavy terminologies from quantum physics. The distillability problem has been a main open problem in quantum information \cite{dss00} for a long time. It lies at the heart of entanglement theory \cite{Terhal2000A, Poon2015Preservers, Cariello2016A} and is related to the separability problem extensively studied by the mathematics community recently \cite{Hou2013Linear, Alfsen2012Finding, Woerdeman2004The}. We briefly introduce the physical motivation of distillability problem, and will give more details in Appendix \ref{app:dis}. In quantum physics, a quantum state is mathematically described by a positive semidefinite matrix.
The state is pure when it has rank one, otherwise the state is mixed. Pure entangled states play an essential role in most quantum-information tasks such as quantum computation. Nevertheless, there is no pure state in nature due to the unavoidable decoherence between the state and environment. So asymptotically converting initially bipartite entangled mixed states into bipartite pure entangled states under local operations and classical communications (LOCC) is a key step in quantum information processing. The distillability problem \cite{dss00,dcl00} asks whether the above-mentioned conversion succeeds for any mixed states. There have been some attempts to the problem in the past years  \cite{pphh2010, dss00,dcl00,vd06,br03,cc08,hhh97,rains1999,cd11jpa, cd16pra}. 

We return to Conjecture \ref{cj:main}. Although it is only a special case of the distillability problem, it has been an open problem for years.
Evidently, the matrix $X$ in \eqref{eq:x} is normal if and only if $A$ and $B$ in \eqref{eq:x} are both normal. Ref. \cite{pphh2010} has shown that 
\begin{lemma}
\label{le:d=4}
Let $\cN_d$ be the subset of normal matrices $X$ in Conjecture \ref{cj:main}. Then Eq.
\eqref{eq3} holds when $X\in\cN_4$.
\qed
\end{lemma}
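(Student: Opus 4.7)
The plan is to use normality to pass to the eigenvalue picture. Write $A = U_A D_A U_A^\dg$ and $B = U_B D_B U_B^\dg$ with $D_A=\diag(a_1,\dots,a_4)$ and $D_B=\diag(b_1,\dots,b_4)$. Since $(U_A\ox U_B)^\dg X (U_A\ox U_B) = D_A\ox I + I\ox D_B$ is diagonal, the singular values of $X$ are exactly $|a_i+b_j|$ for $i,j\in\{1,\dots,4\}$. The constraints in \eqref{eq:tra} become $\sum_i a_i = \sum_j b_j = 0$ and $\sum_i|a_i|^2+\sum_j|b_j|^2=1/4$, and a direct expansion --- with the cross terms killed by trace-zero --- gives $\sum_{i,j}|a_i+b_j|^2 = 4\bigl(\sum_i|a_i|^2+\sum_j|b_j|^2\bigr)=1$. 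Thus the sixteen squared singular values sum to $1$; this already yields $\sigma_1^2+\sigma_2^2\le 1$, and the task is to sharpen the bound by a factor of two.

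After permuting the $a_i$'s and $b_j$'s (a harmless relabelling) and exploiting the $A\leftrightarrow B$ symmetry, one may assume the indices realising the top two squared singular values fall into one of two cases: (i) $\sigma_1^2=|a_1+b_1|^2$ and $\sigma_2^2=|a_1+b_2|^2$ (the pairs share a row), or (ii) $\sigma_1^2=|a_1+b_1|^2$ and $\sigma_2^2=|a_2+b_2|^2$ (the pairs share nothing). In Case (ii) the four entries with $i,j\in\{1,2\}$ are linked by the parallelogram identity $\sum_{i,j\in\{1,2\}}|a_i+b_j|^2 = |a_1+a_2+b_1+b_2|^2 + |a_1-a_2|^2 + |b_1-b_2|^2$, while trace-zero allows $a_1+a_2=-(a_3+a_4)$ and $b_1+b_2=-(b_3+b_4)$, coupling this $2\times 2$ sub-array to the remaining twelve entries. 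In Case (i) a parallel accounting via $\sum_i|a_i+b_j|^2 = \sum_i|a_i|^2+4|b_j|^2$ (for fixed $j$) relates the two top entries to column sums. In either case, each of the remaining twelve or fourteen squared eigenvalues is at most $\sigma_2^2$; combining these bounds with the conserved total of $1$ and the algebraic identities above should yield $\sigma_1^2+\sigma_2^2\le 1/2$.

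The main obstacle I anticipate is that the crude bound ``other eigenvalues $\le\sigma_2^2$'' is too slack on its own: extremal configurations attaining $\sigma_1^2+\sigma_2^2=1/2$ presumably force many of the $|a_i+b_j|$ to coincide (a large cluster equal to $\sigma_2$ alongside vanishing complements). A Lagrange-multiplier analysis at the constrained maximum, or a convexity observation on the sphere of normalised $(a_1,\dots,b_4)$, should first reduce the problem to this degenerate geometry, after which the identities of Cases (i)--(ii) close out the bound.
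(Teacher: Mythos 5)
Your reduction to the eigenvalue picture and the split into two cases (top two entries sharing an index versus disjoint indices) is exactly the skeleton of the proof in Appendix \ref{app:lemma1}, so the setup is sound. But neither case is actually closed. For the disjoint case (ii) you do not need the parallelogram bookkeeping or the coupling to the other twelve entries at all: the elementary identity $\abs{x+y}^2=2(\abs{x}^2+\abs{y}^2)-\abs{x-y}^2$ of \eqref{eq12} gives $\abs{a_1+b_1}^2+\abs{a_2+b_2}^2\le 2(\abs{a_1}^2+\abs{b_1}^2+\abs{a_2}^2+\abs{b_2}^2)\le 2/d=1/2$ in one line, as in \eqref{eq13}, without even invoking the trace-zero conditions.

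The genuine gap is case (i). The statement needed there is $\max(\abs{a_1+b_1}^2+\abs{a_1+b_2}^2)=(3d-4)/d^2$, which equals $1/2$ exactly at $d=4$ (Proposition \ref{pp:ab}); the bound is saturated, so there is no slack to absorb a crude estimate. Your proposed mechanism --- the sixteen squared singular values sum to $1$ while the remaining fourteen are each at most $\sigma_2^2$ --- points in the wrong direction: since $\sigma_1^2+\sigma_2^2=1-\sum_{k\ge3}\sigma_k^2$, upper bounds on the remaining entries yield only a \emph{lower} bound on $\sigma_1^2+\sigma_2^2$. To get an upper bound from the conserved total you would have to bound the residual mass from \emph{below}, which you never do. The fallback that ``a Lagrange-multiplier analysis \ldots should reduce the problem'' is precisely the constrained optimization that constitutes the entire content of Proposition \ref{pp:ab}, and it is left unexecuted. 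Until that optimization is carried out (or imported from \cite{pphh2010}), the lemma is not established.
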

We shall review the proof of Lemma \ref{le:d=4} in appendix \ref{app:lemma1}. The remaining work on Conjecture \ref{cj:main} is to prove it when $X$ is non-normal. It turns out to be a hard problem and there is no progress so far, as far as we know. In this paper we investigate Conjecture \ref{cj:main} in terms of two families of non-normal matrices. They are respectively constructed in Definition \ref{df:cp} and \ref{df:nonnormal}.
We prove Conjecture \ref{cj:main} for the first family of non-normal $X$
in Theorem \ref{thm:nonnormal}, based on Proposition \ref{pp:nonnormal1+nonnormal1} and \ref{pp:nonnormal+normal}. For the second family of non-normal $X$, we prove two special cases of Conjecture \ref{cj:main}. The cases respectively occur when the matrix $A$ in $X$ has rank one in Lemma \ref{le:ra1} and when $d\ge5$ in Lemma \ref{le:ada}. Our results carry out the first step of proving Conjecture \ref{cj:main} for non-normal matrices, and thus the distillability problem in quantum information. We shall also prove that Conjecture \ref{cj:main} holds for normal $X$ with $d>4$ in Lemma \ref{le:d>4}. Combining with Lemma \ref{le:d=4}, we obtain that Conjecture \ref{cj:main} holds for any normal $X$.

The rest of this paper is organized as follows. We introduce some notations and preliminary results in linear algebra in Sec. \ref{sec:pre}. We investigate Conjecture \ref{cj:main} for two families of non-normal matrices in Sec. \ref{sec:nonnormal-cl} 
and \ref{sec:nonnormal-sy}, respectively.
We further prove Conjecture \ref{cj:main} for normal matrices in Sec. \ref{sec:normal}.

\section{\label{sec:Pre}Preliminaries}
\label{sec:pre}
We shall denote $A^\dag$ as the conjugate transpose of matrix $A$. Let $\sigma(A)$ be the spectrum of matrix $A$, $\l_i(A)$ be an eigenvalue of $A$, and $A_{(i,j)}$ be the $(i,j)$ entry of $A$.
We post some lemmas used in the following sections. The following lemma is clear.

\begin{lemma}
\label{le:cj}
The following four statements are equivalent.

(i) Conjecture \ref{cj:main} holds.

(ii) Conjecture \ref{cj:main} holds when $X$ is replaced by $X^T$, $X^*$ or $X^\dg$.

(iii) Conjecture \ref{cj:main} holds   when $X$ is replaced by $(U\ox V)X(U^\dg\ox V^\dg)$ with any unitary matrices $U$ and $V$. 

(iv) Conjecture \ref{cj:main} holds when $X$ is replaced by $I\ox A+B\ox I$.
\qed
\end{lemma}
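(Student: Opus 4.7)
The plan is to exploit the invariance of the two ingredients in Conjecture \ref{cj:main}: the admissible set cut out by \eqref{eq:tra}, and the functional $X\mapsto\sigma_1^2+\sigma_2^2$. The lemma reduces to showing that each substitution in (ii)--(iv) induces an invertible self-map on the admissible set of pairs $(A,B)$ while leaving the top two singular values of $X$ unchanged. I would handle the three equivalences in order of increasing bookkeeping.

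For (i) $\Lra$ (iv), commutativity of matrix addition gives $I\ox A+B\ox I=B\ox I+I\ox A$, which is literally the form \eqref{eq:x} with the roles of $A$ and $B$ swapped. Since \eqref{eq:tra} is symmetric under $(A,B)\mapsto(B,A)$, the admissible set is fixed and the two suprema coincide. (Equivalently, conjugation by the swap unitary $S$ produces $SXS^\dg=I\ox A+B\ox I$, realizing this as a special case of (iii).)

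For (i) $\Lra$ (iii), I would compute
\begin{equation}
(U\ox V)X(U^\dg\ox V^\dg)=(UAU^\dg)\ox I+I\ox(VBV^\dg),
\end{equation}
set $A':=UAU^\dg$, $B':=VBV^\dg$, and note that unitary conjugation preserves both the trace and the Hilbert--Schmidt norm $\tr(\cdot)^\dg(\cdot)$, so $(A',B')$ again satisfies \eqref{eq:tra}. Since $U\ox V$ is unitary, the singular values of $X$ are preserved; and the correspondence $(A,B)\mapsto(A',B')$ is invertible (apply $U^\dg,V^\dg$), so the admissible sets for (i) and (iii) are in bijection and the suprema agree.

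For (i) $\Lra$ (ii), the key fact is that $\sigma_i(M)$, $\tr M$, and $\tr M^\dg M$ are all invariant under each of $M\mapsto M^T$, $M\mapsto M^*$, and $M\mapsto M^\dg$. Applied to $X$, these three operations yield $A^T\ox I+I\ox B^T$, $A^*\ox I+I\ox B^*$, and $A^\dg\ox I+I\ox B^\dg$ respectively, each again of the admissible form; combining gives the equivalence. No step in the argument is substantive, which is why the paper flags the lemma as clear; the only care required is bookkeeping of which map on $(A,B)$ each operation on $X$ induces.
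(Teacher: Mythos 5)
Your proposal is correct and supplies exactly the routine verification the paper omits by declaring the lemma ``clear'': each substitution preserves the constraints \eqref{eq:tra} and the singular values of $X$, and induces a bijection on the admissible set of pairs $(A,B)$. The identities $(U\ox V)X(U^\dg\ox V^\dg)=(UAU^\dg)\ox I+I\ox(VBV^\dg)$, the swap realization of (iv), and the invariance of $\sigma_i$, $\tr$, and $\tr(\cdot)^\dg(\cdot)$ under transpose, conjugation, and adjoint are all the paper could have intended, so there is nothing to compare beyond agreement.
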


Remark: Using statement (iii) we can assume that $A$ and $B$ in Conjecture \ref{cj:main} are both upper-triangular. In particular, we can assume that they are diagonal if and only if they are normal.

\begin{lemma}\cite[Fact 4.10.16.]{bernstein2009matrix}(Gershgorin circle theorem)
\label{le:cirth}
Let $A\in \mathbb{C}^{n\times n}$. Then,
\begin{equation}
\begin{aligned}
\label{eq:spec1}
\sigma(A)\subset G(A)=\bigcup_{i=1}^n\big\{s\in \mathbb{C}: \abs{s-A_{(i,i)}}\leq\sum_{j=1\atop j\neq i}^n\abs{A_{(i,j)}} \big\},
\end{aligned}
\end{equation}
and a corollary is
\begin{equation}
\begin{aligned}
\label{eq:spec11}
\sigma(A)\subset \bigcup_{i=1}^n\big\{s\in \mathbb{C}: \abs{s}\leq\sum_{j=1}^n\abs{A_{(i,j)}} \big\}.
\end{aligned}
\end{equation} 

Remark: Let $R_i=\sum\limits_{j=1\atop j\neq i}^n\abs{A_{(i,j)}}$ and $D(A_{(i,i)},R_i)$ be the closed disc centered at $A_{(i,i)}$ with radius $R_i$. Every eigenvalue of $A$ lies within at least one of the Gershgorin discs $D(A_{(i,i)},R_i)$.
\qed 
\end{lemma}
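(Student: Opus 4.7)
The plan is to prove the Gershgorin circle theorem by showing that for each eigenvalue $\lambda$ of $A$ one can exhibit an index $i$ for which $\lambda$ lies in the corresponding closed disc $D(A_{(i,i)},R_i)$. First I would pick an eigenpair $(\lambda,v)$ with $v\ne 0$, and choose an index $i$ that maximizes $|v_i|$ among the coordinates of $v$. Since $v$ is nonzero, this maximum is strictly positive, which is the crucial observation enabling the subsequent division by $v_i$.

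Next I would expand the $i$-th component of the eigenvalue equation $Av=\lambda v$ as $\sum_{j=1}^{n} A_{(i,j)} v_j = \lambda v_i$, and rearrange to $(\lambda - A_{(i,i)}) v_i = \sum_{j\ne i} A_{(i,j)} v_j$. Dividing by $v_i$ and applying the triangle inequality would give $|\lambda - A_{(i,i)}| \le \sum_{j\ne i} |A_{(i,j)}|\cdot |v_j|/|v_i|$. Because $|v_j|/|v_i|\le 1$ by the choice of $i$, the right-hand side is bounded by $R_i$, which is precisely \eqref{eq:spec1}.

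The corollary \eqref{eq:spec11} then follows from one more application of the triangle inequality: $|\lambda| \le |\lambda - A_{(i,i)}| + |A_{(i,i)}| \le R_i + |A_{(i,i)}| = \sum_{j=1}^{n} |A_{(i,j)}|$, with the same index $i$ as above, so the eigenvalue lies in the $i$-th disc of \eqref{eq:spec11}. The only delicate point in this plan is the selection of the coordinate of maximum modulus, without which the ratios $|v_j|/|v_i|$ cannot be controlled uniformly; beyond that, no obstacle is anticipated since the argument is entirely elementary and requires no spectral theory beyond the definition of an eigenvector.
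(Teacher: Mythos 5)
Your argument is correct and complete: it is the standard proof of the Gershgorin circle theorem via the maximum-modulus coordinate of an eigenvector, and the derivation of \eqref{eq:spec11} from \eqref{eq:spec1} by one further triangle inequality is also right. The paper itself offers no proof of this lemma --- it is cited directly from \cite[Fact 4.10.16]{bernstein2009matrix} --- so there is nothing to compare against; your write-up supplies exactly the expected elementary argument.
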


\begin{lemma}\cite[Fact 4.10.21.]{bernstein2009matrix}(Brauer theorem)
\label{le:ovalth}
Let $A\in \mathbb{C}^{n\times n}$. Then,
\begin{equation}
\begin{aligned}
\label{eq:spec2}
\sigma(A)\subset \bigcup_{{i,j=1\atop i\neq j}}^n\big\{s\in \mathbb{C}: \abs{s-A_{(i,i)}}\abs{s-A_{(j,j)}}\leq\sum_{k=1\atop k\neq i}^n\abs{A_{(i,k)}}\sum_{k=1\atop k\neq j}^n\abs{A_{(j,k)}} \big\}.
\end{aligned}
\end{equation}

Remark: The eigenvalues of $A$ lie in the union of $n(n-1)/2$ ovals of Cassini which is contained in the union of Gershgorin discs \eqref{eq:spec1}. Hence, Brauer theorem is stronger than Gershgorin circle theorem.
\qed
\end{lemma}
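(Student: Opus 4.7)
The plan is to imitate the standard proof of the Gershgorin circle theorem (Lemma~\ref{le:cirth}), but to read off information from the two largest components of an eigenvector rather than only the single largest one; this extra ingredient is exactly what upgrades discs to Cassini ovals.

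First, I would fix an eigenvalue $\lambda$ of $A$ with eigenvector $x=(x_1,\dots,x_n)^T\neq 0$. Among the entries of $x$, pick indices $p\neq q$ such that $|x_p|$ is the largest and $|x_q|$ is the second largest in absolute value (ties broken arbitrarily). The $p$-th and $q$-th rows of $Ax=\lambda x$ read
\begin{equation*}
(\lambda - A_{(p,p)})x_p = \sum_{k\neq p} A_{(p,k)}x_k, \qquad (\lambda - A_{(q,q)})x_q = \sum_{k\neq q} A_{(q,k)}x_k.
\end{equation*}

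Next, I would take absolute values and bound $|x_k|\leq |x_q|$ for $k\neq p$ in the first sum and $|x_k|\leq |x_p|$ for $k\neq q$ in the second sum, with the row sums $R_i=\sum_{k\neq i}|A_{(i,k)}|$ as in the statement. This yields
\begin{equation*}
|\lambda - A_{(p,p)}|\,|x_p|\leq R_p\,|x_q|, \qquad |\lambda - A_{(q,q)}|\,|x_q|\leq R_q\,|x_p|.
\end{equation*}
Multiplying the two inequalities and, when $|x_p|\,|x_q|>0$, cancelling this common positive factor would produce
\begin{equation*}
|\lambda - A_{(p,p)}|\,|\lambda - A_{(q,q)}|\leq R_p R_q,
\end{equation*}
which places $\lambda$ in the Cassini oval indexed by $(p,q)$ and hence in the union on the right-hand side of \eqref{eq:spec2}.

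The one point that requires separate treatment, and which I would flag as the main (though minor) obstacle, is the degenerate case $|x_q|=0$, in which $x$ has only a single nonzero entry $x_p$. Here the $p$-th row of $Ax=\lambda x$ collapses to $\lambda = A_{(p,p)}$, so $|\lambda-A_{(p,p)}|=0$ and $\lambda$ lies trivially in the oval indexed by $(p,j)$ for any $j\neq p$. Since this statement is cited here as Fact~4.10.21 of \cite{bernstein2009matrix}, the above sketch is essentially the textbook argument rather than a new contribution; the containment of the union of ovals inside the union of Gershgorin discs mentioned in the remark follows from a quick case analysis, noting that $|s-A_{(p,p)}|\,|s-A_{(q,q)}|\leq R_pR_q$ forces at least one of $|s-A_{(p,p)}|\leq R_p$ or $|s-A_{(q,q)}|\leq R_q$ to hold.
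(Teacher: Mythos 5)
Your argument is correct and is precisely the standard textbook proof of Brauer's theorem (reading off the two largest eigenvector components, multiplying the two row inequalities, and handling the degenerate case $|x_q|=0$ separately); the paper itself offers no proof, deferring to the cited Fact 4.10.21 of \cite{bernstein2009matrix}, which uses the same argument. Your justification of the remark (that failure of both disc conditions would force the product strictly above $R_pR_q$) is also sound.
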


\begin{lemma}\cite[Corollary 4.3.15.]{hj85}
\label{le:weylcor}
Let $A,B\in \mathbb{H}^{n\times n}$. Let the eigenvalues of $A,B$ be in the increasing order, that is $\lambda_1\leq\lambda_2\leq\cdots\leq\lambda_n$. Then for all $i=1,\cdots n,$ we have
\begin{equation}
\begin{aligned}
\label{eq:eivaineq1}
\lambda_i(A)+\lambda_1(B)\leq\lambda_i(A+B)\leq\lambda_i(A)+\lambda_n(B).
\end{aligned}
\end{equation}
\qed
\end{lemma}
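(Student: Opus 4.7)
The plan is to derive both inequalities from the Courant--Fischer min--max characterization of eigenvalues of Hermitian matrices, which is the standard route to Weyl-type bounds. Recall that for any $M\in\bbH^{n\times n}$ with eigenvalues $\lambda_1\le\cdots\le\lambda_n$, one has
\begin{equation*}
\lambda_i(M)=\max_{\substack{S\subseteq\bbC^n\\ \dim S=i}}\ \min_{\substack{x\in S\\ \|x\|=1}} x^\dg M x
=\min_{\substack{S\subseteq\bbC^n\\ \dim S=n-i+1}}\ \max_{\substack{x\in S\\ \|x\|=1}} x^\dg M x.
\end{equation*}
I would take this as granted (it is a classical consequence of the spectral theorem, valid since $A,B,A+B\in\bbH^{n\times n}$).

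For the upper bound, I first note that for every unit vector $x\in\bbC^n$ the Rayleigh quotient of $B$ satisfies $\lambda_1(B)\le x^\dg B x\le \lambda_n(B)$. Using the additivity $x^\dg(A+B)x=x^\dg A x+x^\dg B x$ together with the first (max--min) form above, I obtain
\begin{equation*}
\lambda_i(A+B)=\max_{\dim S=i}\ \min_{x\in S,\|x\|=1}\bigl(x^\dg A x+x^\dg B x\bigr)
\le \max_{\dim S=i}\ \min_{x\in S,\|x\|=1}\bigl(x^\dg A x+\lambda_n(B)\bigr)=\lambda_i(A)+\lambda_n(B).
\end{equation*}
For the lower bound I run the same estimate in the opposite direction, replacing $x^\dg B x$ by its lower bound $\lambda_1(B)$, which gives $\lambda_i(A+B)\ge \lambda_i(A)+\lambda_1(B)$. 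These two estimates together are exactly \eqref{eq:eivaineq1}.

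There is essentially no technical obstacle here; the only point requiring a bit of care is to match the convention (eigenvalues in \emph{increasing} order) consistently throughout the min--max statement and the estimates, so that the inner $\min$ really is bounded below when $x^\dg B x$ is replaced by $\lambda_1(B)$ and bounded above when it is replaced by $\lambda_n(B)$. Once this bookkeeping is correct, both inequalities follow in a single line each from Courant--Fischer.
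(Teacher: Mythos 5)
The paper does not prove this lemma at all: it is quoted verbatim from Horn and Johnson (Corollary 4.3.15), so there is no internal argument to compare yours against. Your route --- deduce both bounds from the Courant--Fischer variational characterization by sandwiching the Rayleigh quotient $x^\dg Bx$ between $\lambda_1(B)$ and $\lambda_n(B)$ --- is exactly the standard textbook proof of this Weyl-type perturbation bound, and the logic is sound.

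One concrete slip, and it is precisely at the point you yourself flag as the delicate one: with the \emph{increasing} ordering $\lambda_1\le\cdots\le\lambda_n$, the max--min form of Courant--Fischer runs over subspaces of dimension $n-i+1$, not $i$; as you have written it, $\max_{\dim S=i}\min_{x\in S,\|x\|=1}x^\dg Mx$ equals $\lambda_{n-i+1}(M)$ (take $i=1$: maximizing the Rayleigh quotient over lines gives the \emph{largest} eigenvalue, not $\lambda_1$). The correct pairing is $\lambda_i(M)=\min_{\dim S=i}\max_{x\in S}x^\dg Mx=\max_{\dim S=n-i+1}\min_{x\in S}x^\dg Mx$. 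This does not break your proof: since you apply the \emph{same} variational functional to $A+B$ and to $A$, and the constant bounds $\lambda_1(B)\le x^\dg Bx\le\lambda_n(B)$ pass through both the inner min and the outer max, your chain of inequalities establishes the claim for $\lambda_{n-i+1}$ for every $i$, which after reindexing is the full statement \eqref{eq:eivaineq1}. So the argument is correct in substance and only the dimension labels in the quoted min--max identity need to be swapped.
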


\section{Conjecture \ref{cj:main} with non-normal matrices $X$: family 1}
\label{sec:nonnormal-cl}

In this section we prove Conjecture \ref{cj:main} with a family of non-normal matrices $X$ in Definition \ref{df:cp}. We will construct our main result in Theorem \ref{thm:nonnormal}, followed by two preliminary facts i.e. Proposition \ref{pp:nonnormal1+nonnormal1} and \ref{pp:nonnormal+normal}. 

\begin{definition}
\label{df:cp}	
Let $\cP$ be the subset of matrices $X$ with $d=4$, 
such that $A,B$ are normal or have the expressions $A=\bma
0&a_1&0&0\\
a_2&0&0&0\\
0&0&0&a_3\\
0&0&a_4&0\\
\ema$ and $B=
\bma
0&b_1&0&0\\
b_2&0&0&0\\
0&0&0&b_3\\
0&0&b_4&0\\
\ema$. 
\qed
\end{definition}
Since the complex numbers $a_i,b_j$ satisfy \eqref{eq:tra}, $X\in\cP$ may be normal or non-normal. We present the main result of this section as follows. 
\begin{theorem}
\label{thm:nonnormal}	
Eq. \eqref{eq3} holds when $X\in\cP$.	
\end{theorem}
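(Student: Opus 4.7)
The plan is to split $\cP$ into the three structural subcases implicit in Definition~\ref{df:cp} and to invoke, for each subcase, the corresponding result already stated or promised in the paper. By that definition, each of $A$ and $B$ is independently either normal or of the block anti-diagonal form displayed there; using the $A\lra B$ symmetry of Lemma~\ref{le:cj}(iv) to fold together the two asymmetric combinations, three cases remain: (i) both $A,B$ normal; (ii) both $A,B$ of the block form; (iii) $B$ normal and $A$ of the block form. In case (i) the matrix $X$ is itself normal and \eqref{eq3} is immediate from Lemma~\ref{le:d=4}. Case (ii) is exactly the hypothesis of Proposition~\ref{pp:nonnormal1+nonnormal1}, and case (iii) is exactly the hypothesis of Proposition~\ref{pp:nonnormal+normal}. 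Since these cases exhaust $\cP$, the theorem follows.

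The substantive content sits in the two propositions, so let me outline the structural strategy I expect them to rely on. Both $A$ and $B$ in the block form of Definition~\ref{df:cp} preserve the decomposition $\bbC^4=\bbC^2\op\bbC^2$, so one can write $A=A_1\op A_2$ and $B=B_1\op B_2$ with $A_k,B_k\in\bbC^{2\times 2}$; a normal $A$ or $B$ may, after diagonalisation via Lemma~\ref{le:cj}(iii), be partitioned the same way, with $A_k,B_k$ now diagonal $2\times2$ blocks. Writing $\bbC^4=V_1\op V_2$ for the invariant decomposition of $A$ and $\bbC^4=W_1\op W_2$ for that of $B$, the ambient space splits as $\bigoplus_{i,j}V_i\ox W_j$ and
\begin{equation*}
X\;=\;\bigoplus_{i,j\in\{1,2\}}X_{ij},\qquad X_{ij}:=A_i\ox I_2+I_2\ox B_j,
\end{equation*}
a direct sum of four $4\times4$ Kronecker sums. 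The singular values of $X$ are the union of the singular values of the $X_{ij}$, and the normalisation in \eqref{eq:tra} distributes as $\sum_{i,j}\tr X_{ij}^\dg X_{ij}=\tr X^\dg X=1$. Bounding $\sigma_1^2+\sigma_2^2$ then splits along whether the two largest singular values come from a single $X_{ij}$ or from two different blocks.

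A further simplification I would exploit is that each $4\times4$ Gram matrix $X_{ij}^\dg X_{ij}$ itself splits, by a parity argument on the standard basis of $\bbC^2\ox\bbC^2$, into two $2\times2$ Hermitian blocks whose eigenvalues have the closed form $\tfrac12(p+q)\pm\sqrt{\tfrac14(p-q)^2+|r|^2}$. This reduces everything to manipulating a handful of scalar expressions in the moduli $|a_k|,|b_l|$ and the cross terms $a_{2k-1}\bar b_{2l}+\bar a_{2k}b_{2l-1}$. When both largest singular values live in a single $X_{ij}$, the bound $\sigma_1^2+\sigma_2^2\le\tr X_{ij}^\dg X_{ij}\le 1$ is already close and is sharpened using the $2\times 2$ split together with Lemma~\ref{le:weylcor}. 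The case I expect to be the main obstacle is when the two largest singular values come from different $X_{ij}$: one then has to bound a sum of two square roots while redistributing a fixed total ``energy'' among the four blocks, and controlling the off-diagonal cross terms of two independent blocks simultaneously will require either a sharp Cauchy--Schwarz estimate or a preliminary phase reduction via Lemma~\ref{le:cj}(iii) to align the signs of the $a_i,b_j$.
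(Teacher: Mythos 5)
Your case split and the results you invoke for each case coincide exactly with the paper's own proof: both $A,B$ of the block form is handled by Proposition~\ref{pp:nonnormal1+nonnormal1}, the mixed case by diagonalising the normal factor via Lemma~\ref{le:cj}(iii) and applying Proposition~\ref{pp:nonnormal+normal} together with the $A\lra B$ switch, and the fully normal case by Lemma~\ref{le:d=4}. Your additional paragraphs sketching the internal structure of the two propositions also match the paper's actual strategy (the direct-sum decomposition of $X^\dg X$ into $2\times2$ blocks), but they are not needed for the theorem itself, which is proved by citation as you do.
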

\begin{proof}
If $A=
\bma
0&a_1&0&0\\
a_2&0&0&0\\
0&0&0&a_3\\
0&0&a_4&0\\
\ema$ and $B=
\bma
0&b_1&0&0\\
b_2&0&0&0\\
0&0&0&b_3\\
0&0&b_4&0\\
\ema$ then the assertion follows from Proposition \ref{pp:nonnormal1+nonnormal1}. If one of $A$ and $B$ is normal then we may assume that it is diagonal by Lemma \ref{le:cj} (iii). So the assertion follows from Proposition \ref{pp:nonnormal+normal} and the switch of $A,B$ (if any). If $A$ and $B$ are both normal then the assertion follows from Lemma  \ref{le:d=4}. This completes the proof.
\end{proof}

\begin{proposition}
\label{pp:nonnormal1+nonnormal1}
Eq. \eqref{eq3} holds when $X\in\cP$, $A=
\bma
0&a_1&0&0\\
a_2&0&0&0\\
0&0&0&a_3\\
0&0&a_4&0\\
\ema$ and $B=
\bma
0&b_1&0&0\\
b_2&0&0&0\\
0&0&0&b_3\\
0&0&b_4&0\\
\ema$.	
\end{proposition}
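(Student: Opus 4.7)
The plan is to exploit the block structure inherited from $A$ and $B$. Each splits as $A=A_1\op A_2$ and $B=B_1\op B_2$ with $A_1=\bma 0&a_1\\a_2&0\ema$, $A_2=\bma 0&a_3\\a_4&0\ema$ and analogous $B_j$. Accordingly $\bbC^4\ox\bbC^4$ breaks into four $X$-invariant four-dimensional subspaces labelled by $(i,j)\in\{1,2\}^2$, on each of which $X$ restricts to $X_{ij}:=A_i\ox I_2+I_2\ox B_j$. Consequently the sixteen singular values of $X$ form the disjoint union of the singular values of the four $4\times 4$ matrices $X_{ij}$, and it suffices to control the top two values among the four blocks.

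The key observation, which is the main technical step, is that for each $(i,j)$ the Gram matrix $X_{ij}^\dg X_{ij}$ is itself block-diagonal (after the permutation $(1,2,3,4)\mapsto(1,4,2,3)$ of basis vectors) into two $2\times 2$ positive semidefinite blocks $M_{ij,1}$ and $M_{ij,2}$, each having trace $s_{ij}:=\tr(A_i^\dg A_i)+\tr(B_j^\dg B_j)$. This is verified by a direct entry-wise computation that relies only on the anti-diagonal shape of $A_i$ and $B_j$: the nonzero pattern of $X_{ij}$ pairs the basis indices into $(1,4)$ and $(2,3)$ and leaves the diagonal entries of $X_{ij}^\dg X_{ij}$ equal to $\tr(A_i^\dg A_i)+\tr(B_j^\dg B_j)$ in a pairwise fashion. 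In particular, every squared singular value of $X_{ij}$ lies in $[0,s_{ij}]$ and $\tr(X_{ij}^\dg X_{ij})=2s_{ij}$.

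The remainder is case analysis on where the top two squared singular values of $X$ come from, using $\tr(A^\dg A)+\tr(B^\dg B)=1/4$, which forces $s_{11}+s_{22}=s_{12}+s_{21}=1/4$ and hence $s_{ij}\le 1/4$ for all $(i,j)$. If $\sigma_1^2$ and $\sigma_2^2$ both arise from a single block $X_{ij}$, then
\begin{equation*}
\sigma_1^2+\sigma_2^2 \;\le\; \tr(X_{ij}^\dg X_{ij}) \;=\; 2s_{ij} \;\le\; \tfrac{1}{2}.
\end{equation*}
Otherwise they come from two distinct blocks $X_{ij}$ and $X_{i'j'}$, so $\sigma_1^2+\sigma_2^2\le s_{ij}+s_{i'j'}$. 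For complementary index pairs this sum is exactly $1/4$; for pairs sharing one index (say $i=i'$) it equals $2\tr(A_i^\dg A_i)+\tr(B^\dg B)\le 2\tr(A^\dg A)+\tr(B^\dg B)\le 1/2$. In every case $\sigma_1^2+\sigma_2^2\le 1/2$, which is precisely \eqref{eq3}. The only place where anything nontrivial happens is the block-diagonalization with both sub-blocks of equal trace $s_{ij}$; once that is in hand the inequality reduces to counting.
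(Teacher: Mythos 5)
Your proof is correct and follows essentially the same route as the paper: the paper's decomposition $X^\dg X=Y_1\oplus Y_2$ together with the factorization of each characteristic polynomial into four quadratics is exactly your splitting into the eight $2\times2$ Gram blocks $M_{ij,k}$, each quadratic in \eqref{eq:eigenpolyx2} being the characteristic polynomial of one such block. The only difference is the final bound: where the paper evaluates each quadratic at $\lambda=1/4$ and invokes a Cauchy--Schwarz-type estimate on the off-diagonal term, you simply bound each eigenvalue by the block trace $s_{ij}\le 1/4$, a minor but clean streamlining.
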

\begin{proof}
By computing one can show that $X^\dg X=Y_1\oplus Y_2$. We can partion $Y_1$ like $
        \left[
        \begin{array}{c|c}
        Z_1 & Z_2 \\ \hline
        Z_2^\dg & Z_4
        \end{array}
        \right]$,
where $Z_1=\diag(\abs{a_2}^2+\abs{b_2}^2,\abs{a_2}^2+\abs{b_1}^2,\abs{a_2}^2+\abs{b_4}^2,\abs{a_2}^2+\abs{b_3}^2)$, $Z_2=
\bma
0&a_1b_2^*+b_1a_2^*&0&0\\
a_1b_1^*+b_2a_2^*&0&0&0\\
0&0&0&a_1b_4^*+b_3a_2^*\\
0&0&a_1b_3^*+b_4a_2^*&0\\
\ema$ and $Z_4=\diag(\abs{a_1}^2+\abs{b_2}^2,\abs{a_1}^2+\abs{b_1}^2,\abs{a_1}^2+\abs{b_4}^2,\abs{a_1}^2+\abs{b_3}^2)$. One can calculate the eigenpolynomial of $Y_1$ as follow
\begin{equation}
\begin{aligned}
\label{eq:eigenpolyx2}
\abs{\lambda I-Y_1}&=[\big(\lambda-(\abs{a_1}^2+\abs{b_2}^2)\big)\big(\lambda-(\abs{a_2}^2+\abs{b_1}^2)\big)-\abs{a_1b_1^*+a_2^*b_2}^2]\cdot\\
&[\big(\lambda-(\abs{a_1}^2+\abs{b_1}^2)\big)\big(\lambda-(\abs{a_2}^2+\abs{b_2}^2)\big)-\abs{a_1b_2^*+a_2^*b_1}^2]\cdot\\
&[\big(\lambda-(\abs{a_1}^2+\abs{b_4}^2)\big)\big(\lambda-(\abs{a_2}^2+\abs{b_3}^2)\big)-\abs{a_1b_3^*+a_2^*b_4}^2]\cdot\\
&[\big(\lambda-(\abs{a_1}^2+\abs{b_3}^2)\big)\big(\lambda-(\abs{a_2}^2+\abs{b_4}^2)\big)-\abs{a_1b_4^*+a_2^*b_3}^2].
\end{aligned}
\end{equation}
We claim that the larger root of each quadratic polynomial in each line of \eqref{eq:eigenpolyx2} isn't greater than $\frac{1}{4}$. Substituting $\lambda=\sum\limits_{i=1}^4(\abs{a_i}^2+\abs{b_i}^2)$ into the first line of \eqref{eq:eigenpolyx2}, we have $(\abs{a_2}^2+\abs{a_3}^2+\abs{a_4}^2+\abs{b_1}^2+\abs{b_3}^2+\abs{b_4}^2)(\abs{a_1}^2+\abs{a_3}^2+\abs{a_4}^2+\abs{b_2}^2+\abs{b_3}^2+\abs{b_4}^2)-\abs{a_1b_1^*+a_2^*b_2}^2\geq0$, since $\abs{a_1b_1^*+a_2^*b_2}^2\leq\abs{a_1}^2\abs{b_1}^2+\abs{a_2}^2\abs{b_2}^2+\abs{a_1}^2\abs{a_2}^2+\abs{b_1}^2\abs{b_2}^2$. We can make the same conclusion in the same way to substitute $\lambda=\sum\limits_{i=1}^4(\abs{a_i}^2+\abs{b_i}^2)$ into other lines of \eqref{eq:eigenpolyx2}. Hence, the larger root of each quadratic polynomial in each line of \eqref{eq:eigenpolyx2} isn't greater than $\sum\limits_{i=1}^4(\abs{a_i}^2+\abs{b_i}^2)=\frac{1}{4}$. So any eigenvalue of $Y_1$ isn't greater than $\frac{1}{4}$. One can show that $Y_2$ can be evolved from $Y_1$ by replacing $a_1$ with $a_3$ and replacing $a_2$ with $a_4$ in $Y_1$. Hence, we can get a similar formulation of $\abs{\lambda I-Y_2}$ like \eqref{eq:eigenpolyx2} by replacing $a_1$ with $a_3$ and replacing $a_2$ with $a_4$ in \eqref{eq:eigenpolyx2}. In the same way, we conclude that any eigenvalue of $Y_2$ isn't greater than $\frac{1}{4}$. Since $X^\dg X=Y_1 \oplus Y_2$, the sum of the largest two eigenvalues of $X^\dg X$ is at most $\frac{1}{2}$. This completes the proof.
\end{proof}

We proceed with the proof of Proposition \ref{pp:nonnormal+normal}.
For this purpose we need the following two preliminary results. The first result is known as one of the basic inequalities.
\begin{lemma}
\label{le:bi}
If $a,b,x,y\in\bbR$ then $ab(x+y)^2\le (a+b)(ax^2+by^2)$.	
\end{lemma}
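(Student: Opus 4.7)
The plan is to treat this as a routine algebraic identity: expand the difference between the two sides and show it is a perfect square, hence nonnegative, with no sign assumptions needed on any of the four variables.

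Concretely, I would first expand the right-hand side as
\begin{equation*}
(a+b)(ax^2+by^2)=a^2x^2+b^2y^2+ab(x^2+y^2),
\end{equation*}
and the left-hand side as
\begin{equation*}
ab(x+y)^2=ab(x^2+y^2)+2abxy.
\end{equation*}
Subtracting, the $ab(x^2+y^2)$ terms cancel, leaving
\begin{equation*}
(a+b)(ax^2+by^2)-ab(x+y)^2=a^2x^2-2abxy+b^2y^2=(ax-by)^2\ge 0,
\end{equation*}
which is the desired inequality.

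There is essentially no obstacle: the reduction to a single square $(ax-by)^2$ makes the statement hold for all real $a,b,x,y$ without any positivity hypothesis, and it also identifies the equality case as $ax=by$. I would keep the argument to the three lines above, since any further decomposition (e.g., trying to invoke Cauchy--Schwarz, which would require $a,b\ge 0$) is unnecessary and actually weaker than the direct expansion.
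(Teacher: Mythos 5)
Your proof is correct: the expansion is right, and $(a+b)(ax^2+by^2)-ab(x+y)^2=(ax-by)^2\ge 0$ establishes the claim for all real $a,b,x,y$, with equality precisely when $ax=by$. The paper states this lemma without proof (calling it ``one of the basic inequalities''), so there is nothing to compare against; your direct expansion is the standard verification, and your observation that no positivity hypothesis is needed is a genuine (if minor) strengthening over the Cauchy--Schwarz route, though in the paper's application (Case 4 of Proposition \ref{pp:nonnormal+normal}) the quantities $a,b,x,y$ are all nonnegative anyway.
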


\begin{lemma}
\label{le:nonnormal+normal}
Suppose $a_1,a_2,b_1,b_2$ are nonnegative real numbers and $a_1^2+a_2^2+b_1^2+b_2^2=1/4$. Then 
\begin{equation}
\label{eq:a12}
\sqrt{(a_1^2-a_2^2)^2+4b_1^2(a_1+a_2)^2}
+
\sqrt{(a_1^2-a_2^2)^2+4b_2^2(a_1+a_2)^2}
\le1/2.
\end{equation}
\end{lemma}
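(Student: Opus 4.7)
The plan is to exploit the fact that each square root has $(a_1+a_2)^2$ as a common factor and then reduce the inequality to a two-variable AM--GM estimate using the constraint $a_1^2+a_2^2+b_1^2+b_2^2 = 1/4$.

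First I would rewrite $(a_1^2-a_2^2)^2 = (a_1-a_2)^2(a_1+a_2)^2$ and pull out $(a_1+a_2)^2$ from under each radical. Since $a_1,a_2 \ge 0$ so that $a_1+a_2 \ge 0$, the inequality to prove becomes
\begin{equation*}
(a_1+a_2)\Bigl[\sqrt{(a_1-a_2)^2+4b_1^2}+\sqrt{(a_1-a_2)^2+4b_2^2}\,\Bigr] \le \tfrac12.
\end{equation*}
Setting $p=a_1+a_2$ and $q=a_1-a_2$, note the algebraic identity $p^2+q^2=2(a_1^2+a_2^2)$, so the constraint becomes $p^2+q^2+2(b_1^2+b_2^2)=\tfrac12$.

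Next I would kill the sum-of-square-roots by the elementary bound $\sqrt{x}+\sqrt{y}\le\sqrt{2(x+y)}$ (equivalent to Cauchy--Schwarz), applied with $x=q^2+4b_1^2$ and $y=q^2+4b_2^2$. This yields
\begin{equation*}
\sqrt{q^2+4b_1^2}+\sqrt{q^2+4b_2^2}\le 2\sqrt{q^2+2(b_1^2+b_2^2)},
\end{equation*}
so the left-hand side of the target inequality is at most $2p\sqrt{q^2+2(b_1^2+b_2^2)}$.

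Finally I would invoke AM--GM on the two nonnegative quantities $p^2$ and $q^2+2(b_1^2+b_2^2)$, whose sum equals $\tfrac12$ by the normalization above. This gives
\begin{equation*}
p\sqrt{q^2+2(b_1^2+b_2^2)} = \sqrt{p^2\cdot\bigl(q^2+2(b_1^2+b_2^2)\bigr)} \le \frac{p^2+q^2+2(b_1^2+b_2^2)}{2}=\tfrac14,
\end{equation*}
and combining with the previous step yields the desired bound $\tfrac12$.

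I do not anticipate any real obstacle: the only step that could conceivably go wrong is the sum-of-square-roots bound, but it is tight precisely when $b_1^2=b_2^2$, and the AM--GM step is tight when $p^2 = q^2+2(b_1^2+b_2^2) = \tfrac14$; both are achievable simultaneously (e.g.\ $a_1 = 1/2$, $a_2=b_1=b_2=0$ saturates the whole chain), so the resulting bound is sharp and the chained estimates cannot be loose enough to miss $\tfrac12$.
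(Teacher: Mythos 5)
Your proof is correct and is essentially the paper's own argument: both rest on the bound $\sqrt{x}+\sqrt{y}\le\sqrt{2(x+y)}$ followed by the normalization $a_1^2+a_2^2+b_1^2+b_2^2=1/4$ and an AM--GM (equivalently, $u(1-u)\le 1/4$) to finish. Your preliminary factoring of $(a_1+a_2)$ out of each radical is only a cosmetic reorganization of the same computation, and your saturation example $a_1=1/2$, $a_2=b_1=b_2=0$ is a nice confirmation that the chain of estimates is tight.
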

\begin{proof}
Using the basic inequality $x+y\le \sqrt{2(x^2+y^2)}$ for any real $x,y$, we obtain that the lhs of \eqref{eq:a12} is upper bounded by  
\begin{eqnarray}
&&
\sqrt{2\bigg(2(a_1^2-a_2^2)^2+4(b_1^2+b_2^2)(a_1+a_2)^2\bigg)}	
\notag\\
&=&
\sqrt{2(a_1+a_2)^2\bigg(1-2(a_1+a_2)^2\bigg)}	
\notag\\
&\le & 1/2.
\end{eqnarray}
The equality follows from the equation $a_1^2+a_2^2+b_1^2+b_2^2=1/4$. 
This completes the proof.	
\end{proof}

\begin{proposition}
\label{pp:nonnormal+normal}
Eq. \eqref{eq3} holds when $X\in\cP$, $A=
\bma
0&a_1&0&0\\
a_2&0&0&0\\
0&0&0&a_3\\
0&0&a_4&0\\
\ema$ and $B=\diag(b_1,b_2,b_3,b_4)$.	
\end{proposition}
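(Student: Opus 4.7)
The plan is to exploit the diagonal structure of $B$ and the block-diagonal form of $A$ to decompose $X$ into a direct sum of eight $2\times 2$ matrices, and then to bound the top two singular values by a short case analysis.

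First (decomposition). By Lemma \ref{le:cj}(iv) we may replace $X$ by $X' = I\ox A+B\ox I$. Since $B$ is diagonal with entries $b_1,\dots,b_4$, $X'$ is block diagonal, $X'=\op_{j=1}^{4}(A+b_j I_4)$. And since $A=A_1\op A_2$ with $A_k=\bma 0 & a_{2k-1}\\ a_{2k} & 0\ema$, each $4\times 4$ summand splits further, so that $X'$ is unitarily equivalent to $\op_{k=1}^{2}\op_{j=1}^{4}M_{k,j}$, with $M_{k,j}:=\bma b_j & a_{2k-1}\\ a_{2k} & b_j\ema$. The $16$ eigenvalues of $X^\dg X$ are therefore exactly the eight pairs $\{\sigma_+^2(M_{k,j}),\sigma_-^2(M_{k,j})\}_{k,j}$.

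Second (per-block bounds). A direct computation yields $\tr(M_{k,j}^\dg M_{k,j})=2|b_j|^2+|a_{2k-1}|^2+|a_{2k}|^2$ and $\sigma_\pm^2(M_{k,j})=\tfrac12\bigl(\tr(M_{k,j}^\dg M_{k,j})\pm\sqrt{D_{k,j}}\bigr)$. Using $\mathrm{Re}((b_j^*)^2a_{2k-1}a_{2k})\le|b_j|^2|a_{2k-1}||a_{2k}|$ to absorb the phases reduces $D_{k,j}$ to at most $(|a_{2k-1}|^2-|a_{2k}|^2)^2+4|b_j|^2(|a_{2k-1}|+|a_{2k}|)^2$, the very expression appearing in Lemma \ref{le:nonnormal+normal}. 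In particular $\sigma_+^2(M_{k,j})\le\tr(M_{k,j}^\dg M_{k,j})$.

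Third (case analysis on the top two eigenvalues). Either (i) both come from a single block $M_{k,j}$, in which case their sum equals $\tr(M_{k,j}^\dg M_{k,j})=2|b_j|^2+|a_{2k-1}|^2+|a_{2k}|^2\le 2\cdot\tfrac14=\tfrac12$ by \eqref{eq:tra}; or (ii) they are the larger eigenvalues $\sigma_+^2$ of two distinct blocks $(k_1,j_1)\neq(k_2,j_2)$. In subcase (ii) with $k_1=k_2=k$, Lemma \ref{le:nonnormal+normal} (after a homogeneity rescaling, since we only know $|a_{2k-1}|^2+|a_{2k}|^2+|b_{j_1}|^2+|b_{j_2}|^2\le\tfrac14$ rather than $=\tfrac14$) controls $\sqrt{D_{k,j_1}}+\sqrt{D_{k,j_2}}$; combining with the trace contributions delivers the sum $\le 2(|a_{2k-1}|^2+|a_{2k}|^2+|b_{j_1}|^2+|b_{j_2}|^2)\le\tfrac12$. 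In subcase (ii) with $k_1\neq k_2$ the trace bound alone suffices: $\sigma_+^2(M_{1,j_1})+\sigma_+^2(M_{2,j_2})\le 2(|b_{j_1}|^2+|b_{j_2}|^2)+\sum_{l=1}^{4}|a_l|^2\le(|b_{j_1}|^2+|b_{j_2}|^2)+\sum_j|b_j|^2+\sum_l|a_l|^2\le\tfrac14+\tfrac14$.

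The main obstacle is the same-$k$ distinct-$j$ subcase of (ii): a bound on $\sqrt{D_{k,j_1}}+\sqrt{D_{k,j_2}}$ sharp enough to combine with the trace terms and still yield $\tfrac12$ is exactly what Lemma \ref{le:nonnormal+normal} (whose proof rests on the basic inequality Lemma \ref{le:bi}) is built for. The rescaling from the lemma's equality normalization $=\tfrac14$ down to the ambient $\le\tfrac14$ is a minor but essential wrinkle; the complex-to-real reduction in Step 2 via the phase inequality is the other small delicate point.
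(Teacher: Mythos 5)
Your decomposition is essentially the paper's: your blocks $M_{k,j}$ satisfy $M_{1,j}^\dg M_{1,j}=Y_j$ and $M_{2,j}^\dg M_{2,j}=Z_j$ in the notation of the paper's proof, and your case (i) and your same-$k$ subcase of (ii) reproduce the paper's Cases 1--3. Your trace-only bound for $k_1\neq k_2$ \emph{and} $j_1\neq j_2$ is in fact a pleasant simplification of the paper's Case 4. But there is a genuine gap: your subcase ``$k_1\neq k_2$'' silently assumes $j_1\neq j_2$. The step $2(\abs{b_{j_1}}^2+\abs{b_{j_2}}^2)\le(\abs{b_{j_1}}^2+\abs{b_{j_2}}^2)+\sum_j\abs{b_j}^2$ uses that $\abs{b_{j_1}}^2$ and $\abs{b_{j_2}}^2$ are two \emph{distinct} terms of $\sum_j\abs{b_j}^2$; when $j_1=j_2=j$ it reads $2\abs{b_j}^2\le\sum_i\abs{b_i}^2$, which is false in general (take $b=(3t,-t,-t,-t)$, so $2\abs{b_1}^2=18t^2>12t^2=\sum_i\abs{b_i}^2$). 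Concretely, for $j_1=j_2=1$, $k_1\neq k_2$, the sum of the two block traces is $4\abs{b_1}^2+\sum_l\abs{a_l}^2$, which can reach $3/4$ under \eqref{eq:tra} (the actual eigenvalue sum stays below $1/2$ only because the discarded $\sigma_-^2$ terms are not negligible there), so no trace bound can close this case.

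This missing case --- $\l$ from $M_{1,j}$ and $\m$ from $M_{2,j}$ with the \emph{same} $j$ --- is precisely the paper's Case 5 and is the hardest part of its proof. It cannot be handled by the tools you have assembled: one must invoke the so-far-unused constraint $\tr B=0$ to get $\abs{b_j}^2\le\tfrac{3}{16}-\tfrac34\sum_l\abs{a_l}^2$ (Eq. \eqref{eq:non5}), observe that the resulting upper bound is monotone in $\abs{b_j}^2$, and then carry out a trigonometric parametrization of $(\abs{a_1},\abs{a_2},\abs{a_3},\abs{a_4})$ with a three-subcase optimization (Eqs. \eqref{eq:a1a2}--\eqref{eq:f2dxg}). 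Until you supply an argument for this configuration, the proposition is not proved. The other small wrinkle you flag --- applying Lemma \ref{le:nonnormal+normal} when the relevant squares sum to at most $1/4$ rather than exactly $1/4$ --- is real but minor and is resolvable by monotonicity; it is not the issue.
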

\begin{proof}
Let $X=A\ox I+I\ox B$ in \eqref{eq:x}. Since $A$ and $B$ satisfy \eqref{eq:tra}, we have
\begin{eqnarray}
\label{eq:non1}
\sum^4_{i=1} b_i&=&0,
\\	
\sum^4_{j=1} (\abs{a_j}^2+\abs{b_j}^2)&=&1/4.
\label{eq:non2}
\end{eqnarray}
By computation one can show that $X^\dag X$ has the same eigenvalues with that of $\oplus^4_{j=1} (Y_j \oplus Z_j)$ where $Y_j$ and $Z_j$ are order-2 submatrices such that
\begin{equation}
\label{eq:yj}
Y_j=
\bma
\abs{a_2}^2+\abs{b_j}^2 & b_j^*a_1+a_2^*b_j
\\
b_ja_1^*+a_2b_j^* & \abs{a_1}^2+\abs{b_j}^2 
\ema,	
\end{equation}
and
\begin{equation}
\label{eq:zj}
Z_j=
\bma
\abs{a_4}^2+\abs{b_j}^2 & b_j^*a_3+a_4^*b_j
\\
b_ja_3^*+a_4b_j^* & \abs{a_3}^2+\abs{b_j}^2 
\ema.	
\end{equation}
Let $\l$ and $\m$ be two arbitrary eigenvalues of $X^\dg X$. Then proving Conjecture \ref{cj:main} is equivalent to proving $\l+\m\le1/2$. We investigate five cases for $\l$ and $\m$.

Case 1. $\l$ and $\m$ are the eigenvalues of the same $Y_j$ or $Z_j$. Eqs. \eqref{eq:non2} and \eqref{eq:yj}
imply that $\l+\m=\tr Y_j \le 1/2$. Eqs. \eqref{eq:non2} and \eqref{eq:zj}
imply that $\l+\m=\tr Z_j \le 1/2$. So Conjecture \ref{cj:main} holds.

Case 2. $\l$ and $\m$ are the eigenvalues of different $Y_j$'s. Without loss of generality we can assume that $\l$ is the maximum eigenvalue of  $Y_1$, and $\m$ is the maximum eigenvalue of  $Y_2$. By computation one can obtain
\begin{eqnarray}
\label{eq:non3}
\l &=& {1\over2}\bigg( \abs{a_1}^2+\abs{a_2}^2+2\abs{b_1}^2+\sqrt{(\abs{a_1}^2-\abs{a_2}^2)^2+4\abs{b_1a_2^*+a_1b_1^*}^2}\bigg),
\\
\label{eq:non4}
\m &=& {1\over2}\bigg( \abs{a_1}^2+\abs{a_2}^2+2\abs{b_2}^2+\sqrt{(\abs{a_1}^2-\abs{a_2}^2)^2+4\abs{b_2a_2^*+a_1b_2^*}^2}\bigg).
\end{eqnarray} 
So $\l+\m$ is upper bounded by the sum of the rhs of \eqref{eq:non3} and \eqref{eq:non4}, in which any $a_i$ and $b_j$ are replaced by $\abs{a_i}$ and $\abs{b_j}$, respectively, and $a_3,a_4,b_3,b_4$ equal zero. Using Lemma \ref{le:nonnormal+normal} and \eqref{eq:non2}, we have $\l+\m\le1/2$. So Conjecture \ref{cj:main} holds.

Case 3. $\l$ and $\m$ are the eigenvalues of different $Z_j$'s. We can prove Conjecture \ref{cj:main} by following the proof in Case 2, except that we switch $a_1$ and $a_3$, and switch $a_2$ and $a_4$ at the same time.

Case 4. $\l$ is the eigenvalue of some $Y_j$, $\m$ is the eigenvalue of some $Z_k$, and $j\ne k$.
Without loss of generality we may assume that $j=1$ and $k=2$. By computation one can show that
\begin{eqnarray}
\label{eq:case4}
\l+\m
&=&
{1\over2}\bigg( 
\abs{a_1}^2+\abs{a_2}^2+\abs{a_3}^2+\abs{a_4}^2+2(\abs{b_1}^2+\abs{b_2}^2)
\notag\\
&+&\sqrt{(\abs{a_1}^2-\abs{a_2}^2)^2+4\abs{b_1a_2^*+a_1b_1^*}^2}
+\sqrt{(\abs{a_3}^2-\abs{a_4}^2)^2+4\abs{b_2a_4^*+a_3b_2^*}^2}
\bigg)
\notag\\
&\le &
{1\over2}\bigg( 
\abs{a_1}^2+\abs{a_2}^2+\abs{a_3}^2+\abs{a_4}^2+2(\abs{b_1}^2+\abs{b_2}^2)
\notag\\
&+&
\sqrt{(\abs{a_1}^2-\abs{a_2}^2)^2+4\abs{b_1}^2(\abs{a_1}+\abs{a_2})^2}
+
\sqrt{(\abs{a_3}^2-\abs{a_4}^2)^2+4\abs{b_2}^2(\abs{a_3}+\abs{a_4})^2}
\bigg)
\notag\\
&\le &
{1\over2}\bigg( 
\abs{a_1}^2+\abs{a_2}^2+\abs{a_3}^2+\abs{a_4}^2+2(\abs{b_1}^2+\abs{b_2}^2)
\notag\\
&+&
\sqrt{\bigg((\abs{a_1}+\abs{a_2})^2+(\abs{a_3}+\abs{a_4})^2\bigg)\bigg((\abs{a_1}-\abs{a_2})^2+(\abs{a_3}-\abs{a_4})^2+4(\abs{b_1}^2+\abs{b_2}^2)\bigg)}
\bigg)
\notag\\
&:=&
{1\over2}\bigg(x+\sqrt{y(2x-y)}\bigg)
\notag\\
&\le &
x
\notag\\
&\le &
1/2,
\end{eqnarray}
where $x=\abs{a_1}^2+\abs{a_2}^2+\abs{a_3}^2+\abs{a_4}^2+2(\abs{b_1}^2+\abs{b_2}^2)$. The second inequality in \eqref{eq:case4} follows from Lemma \ref{le:bi} in which we have set $x=\sqrt{(\abs{a_1}^2-\abs{a_2}^2)^2+4\abs{b_1}^2(\abs{a_1}+\abs{a_2})^2}
$, $y=
\sqrt{(\abs{a_3}^2-\abs{a_4}^2)^2+4\abs{b_2}^2(\abs{a_3}+\abs{a_4})^2}$, $a=(a_3+a_4)^2$ and $b=(a_1+a_2)^2$. The last inequality in \eqref{eq:case4} follows from \eqref{eq:non2}. So Conjecture \ref{cj:main} holds.

Case 5. $\l$ is the eigenvalue of some $Y_j$, and $\m$ is the eigenvalue of some $Z_j$. Without loss of generality we may assume that $j=1$. 
Eqs. \eqref{eq:non1} and \eqref{eq:non2} imply that
$
\abs{b_1}^2
=
\abs{b_2+b_3+b_4}^2
\le
3(\abs{b_2}^2+\abs{b_3}^2+\abs{b_4}^2)
=
{3\over4}-3\abs{b_1}^2-3\sum^4_{j=1} \abs{a_j}^2.
$
Hence
\begin{equation}
\label{eq:non5}
\abs{b_1}^2\le {3\over16} - {3\over4}	\sum^4_{j=1} \abs{a_j}^2.
\end{equation}
On the other hand, by computation one can show that
\begin{eqnarray}
\label{eq:case5}
\l+\m
&=&
{1\over2}\bigg( 
\abs{a_1}^2+\abs{a_2}^2+\abs{a_3}^2+\abs{a_4}^2+4\abs{b_1}^2
\notag\\
&+&\sqrt{(\abs{a_1}^2-\abs{a_2}^2)^2+4\abs{b_1a_2^*+a_1b_1^*}^2}
+\sqrt{(\abs{a_3}^2-\abs{a_4}^2)^2+4\abs{b_1a_4^*+a_3b_1^*}^2}
\bigg)
\notag\\
&\le &
{1\over2}\bigg( 
\abs{a_1}^2+\abs{a_2}^2+\abs{a_3}^2+\abs{a_4}^2+4\abs{b_1}^2
\notag\\
&+&
\sqrt{(\abs{a_1}^2-\abs{a_2}^2)^2+4\abs{b_1}^2(\abs{a_1}+\abs{a_2})^2}
+
\sqrt{(\abs{a_3}^2-\abs{a_4}^2)^2+4\abs{b_1}^2(\abs{a_3}+\abs{a_4})^2}
\bigg)
\notag\\
&:=&
\L(\abs{a_1},\abs{a_2},\abs{a_3},\abs{a_4},\abs{b_1}^2).
\end{eqnarray}
It monotonically increases with $\abs{b_1}^2$. Using \eqref{eq:non2} we may assume that $\abs{a_1}=x \cos d \cos g$, $\abs{a_2}=x \cos d \sin g$, $\abs{a_3}=x \sin d \cos h$, and $\abs{a_4}=x \sin d \sin h$ where the real numbers $x\in[0, 1/2]$, and $d,g,h\in[0,\pi/2]$. 
Eqs. \eqref{eq:non5} and \eqref{eq:case5} imply that 
\begin{eqnarray}
\label{eq:a1a2}	
\l+\m &\le & \L(\abs{a_1},\abs{a_2},\abs{a_3},\abs{a_4},{3\over16} - {3\over4}	\sum^4_{j=1} \abs{a_j}^2)
\notag\\
&=&
{1\over8}\bigg(
3-8x^2
+2x\cos d\sqrt{f_1(d,x,g)}
+2x\sin d\sqrt{f_2(d,x,h)}
\bigg)
\end{eqnarray}
where
\begin{eqnarray}
\label{eq:f1dxg}
f_1(d,x,g)=
3-10x^2+2x^2\cos 2d
+(3-12x^2)\sin 2g	
+(-2x^2-2x^2\cos 2d) \sin^2 2g,
\end{eqnarray}
and
\begin{eqnarray}
\label{eq:f2dxg}
f_2(d,x,h)=
3-10x^2-2x^2\cos 2d
+(3-12x^2)\sin 2h	
+(-2x^2+2x^2\cos 2d) \sin^2 2h.
\end{eqnarray}
One can verify that $f_1(d,x,g)=f_2(\p/2-d,x,g)$. The last equation of \eqref{eq:a1a2} is unchanged under the switch of $d$ and $\p/2-d$, and the switch of $g$ and $h$ at the same time. So the maximum of \eqref{eq:a1a2} is achieved when $x\in[0,1/2]$, $d\in[0,\p/4]$, and $g,h\in[0,\pi/2]$. 

To prove the assertion, one need to obtain the maximum of \eqref{eq:a1a2}. For this purpose we need to obtain the maximum of the function $f_1$ in terms of $g$, and the maximum of the function $f_2$ in terms of $h$. The two functions $f_1$ and $f_2$ are respectively parabolas of cartesian coordinates $(\sin2g,f_1)$ and $(\sin2h,f_2)$. The axises of symmetry of $f_1$ and $f_2$ are respectively $\sin2g={3-12x^2\over4x^2+4x^2\cos2d}$ and $\sin2h={3-12x^2\over4x^2-4x^2\cos2d}$. 
If $\sin2g=1$ or $\sin2h=1$, then we respectively obtain $x={1\over\sqrt{4+{8\over3}\cos^2 d}}$ or $x={1\over\sqrt{4+{8\over3}\sin^2 d}}$. We discuss three subcases in terms of the above facts, $\sin2g\le1$, $\sin2h\le1$ and $\cos d \ge \sin d$.

Subcase 5.1.  $x\in[0,{1\over\sqrt{4+{8\over3}\cos^2 d}}]$. One can show that $\max_g f_1(d,x,g) = f_1(d,x,\p/4)$ and $\max_h f_2(d,x,h) = f_1(d,x,\p/4)$. Then one can show that \eqref{eq:a1a2} is upper bounded by $1/2$.

Subcase 5.2.  $x\in[{1\over\sqrt{4+{8\over3}\cos^2 d}}, {1\over\sqrt{4+{8\over3}\sin^2 d}}]$. One can show that $\max_g f_1(d,x,g)$ is achieved when $\sin 2g= {3-12x^2\over4x^2+4x^2\cos2d}$, and $\max_h f_2(d,x,h) = f_1(d,x,\p/4)$. Then one can show that \eqref{eq:a1a2} is upper bounded by $3/8$.

Subcase 5.3.  $x\in[{1\over\sqrt{4+{8\over3}\sin^2 d}},{1\over2}]$. One can show that $\max_g f_1(d,x,g)$ is achieved when $\sin 2g= {3-12x^2\over4x^2+4x^2\cos2d}$, and $\max_h f_2(d,x,h)$ is achieved when $\sin 2h= {3-12x^2\over4x^2-4x^2\cos2d}$. Then one can show that \eqref{eq:a1a2} is upper bounded by $3/8$. 

We have shown that \eqref{eq:a1a2} is upper bounded by $1/2$, i.e., $\l+\m\le1/2$. So Conjecture \ref{cj:main} holds in Case 5.

To conclude we have proved Conjecture \ref{cj:main} for the matrices $A,B$ in all five cases for $\l,\m$. This completes the proof.	
\end{proof}

Using the statement of Lemma \ref{le:cj} (iii), we may assume in Conjecture \ref{cj:main} that $A$ or $B$ is diagonal if and only if it is normal. Hence, Proposition \ref{pp:nonnormal+normal} implies that Conjecture \ref{cj:main} holds when $X\in\cP$ where one of $A$ and $B$ is normal.

\section{Conjecture \ref{cj:main} with non-normal matrices $X$: family 2}
\label{sec:nonnormal-sy}

In this section we investigate Conjecture \ref{cj:main} with $X$ defined as follows.

\begin{definition}
\label{df:nonnormal}	
Let $A=D_AP_A,B=D_BP_B$, where $D_A=\diag(a_1,a_2,\cdots,a_d)$, $D_B=\diag(b_1,b_2,\cdots,b_d)$ and $P_A,P_B$ are permutation matrices with zero-diagonals which make $A,B$ satisfy the first equation of \eqref{eq:tra} naturally. Meanwhile, $A$ and $B$ are under the following constraint 
\begin{equation}
\begin{aligned}
\label{eq:transformtra1}
\sum_{i=1}^d
(\abs{a_i}^2+\abs{b_i}^2)=\frac{1}{d}
\end{aligned}
\end{equation}
which follows from the second equation of \eqref{eq:tra}. 
\qed
\end{definition}
Such $X$ may be normal or non-normal, and may have dimension $d\ge4$. It is different from the set $\cP$ in Definition \ref{df:cp}. We shall prove Conjecture \ref{cj:main} when $A,B$ satisfy Definition \ref{df:nonnormal}, and two additional conditions respectively in Lemma \ref{le:ra1}, \ref{le:ada}.
In particular, Lemma \ref{le:ada} proves Conjecture \ref{cj:main} for $d\geq 5$. 

\begin{lemma}
\label{le:ra1}
Eq. \eqref{eq3} holds when $A,B$ satisfy Definition \ref{df:nonnormal} and $\rank A=1$.
\end{lemma}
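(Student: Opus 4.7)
The plan is to exploit the very restrictive structure of $A$ forced by $\rank A=1$ together with the fact that $B=D_BP_B$ makes $B^\dag B$ diagonal, so that $X^\dag X$ decomposes into many small pieces whose eigenvalues are explicit.

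First I would use $A=D_AP_A$ and $\rank A=1$ to conclude that exactly one of the diagonal entries $a_1,\ldots,a_d$ is nonzero, say $a_{i_0}$, so that $A=a_{i_0}\ketbra{i_0}{j_0}$ with $j_0=\pi_A(i_0)\neq i_0$. Set $r=|a_{i_0}|^2$ and $s=\sum_i|b_i|^2$, so that \eqref{eq:transformtra1} reads $r+s=1/d$. Expanding $X^\dag X=A^\dag A\ox I+A^\dag\ox B+A\ox B^\dag+I\ox B^\dag B$ and using $A^\dag A=r\,\ketbra{j_0}{j_0}$, $AA^\dag=r\,\ketbra{i_0}{i_0}$, I would observe that for every first-factor index $i\notin\{i_0,j_0\}$ the corresponding diagonal block of $X^\dag X$ is exactly $B^\dag B$ and is decoupled from the rest. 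Hence
\[
X^\dag X\ \cong\ M\ \oplus\ (B^\dag B)^{\oplus(d-2)},\qquad
M=\bma B^\dag B & a_{i_0}B^\dag\\ \overline{a_{i_0}}B & rI+B^\dag B\ema,
\]
where $M$ is the $2d\times 2d$ block coupling the rows with $i\in\{i_0,j_0\}$.

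The main structural step is to decompose $M$ further. Writing $\pi=\pi_B$, one checks that $B^\dag B=\diag(|b_{\pi^{-1}(j)}|^2)_{j=1}^d$, while $B^\dag$ has exactly one nonzero entry per row (in column $\pi^{-1}(j)$ of row $j$). Hence the pairing $(0,j)\leftrightarrow(1,\pi^{-1}(j))$, $j=1,\ldots,d$, partitions the $2d$ indices of $M$ into $d$ disjoint pairs, and $M$ is block-diagonal in this pairing with $2\times 2$ blocks
\[
M_j=\bma p_j & a_{i_0}\overline{b_{\pi^{-1}(j)}}\\ \overline{a_{i_0}}b_{\pi^{-1}(j)} & r+q_j\ema,\qquad p_j=|b_{\pi^{-1}(j)}|^2,\ q_j=|b_{\pi^{-2}(j)}|^2.
\]
Each $M_j$ has trace $t_j=r+p_j+q_j$ and determinant $p_jq_j$, so its largest eigenvalue satisfies $\lambda_+^{(j)}\le t_j$. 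Two elementary observations will drive the bookkeeping: $\pi$ is fixed-point-free, so $\pi^{-1}(j)\neq\pi^{-2}(j)$ and $p_j,q_j$ are two distinct $|b_i|^2$'s; and as $j$ ranges over $\{1,\ldots,d\}$, each $|b_i|^2$ occurs exactly once among $\{p_j\}$ and once among $\{q_j\}$, so $\sum_j p_j=\sum_j q_j=s$.

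Finally I would enumerate where the top two eigenvalues of $X^\dag X$ live: (i) both among the $(d-2)$ copies of $B^\dag B$, bounded by $2\max_i|b_i|^2\le 2s$; (ii) both inside a single block $M_j$, bounded by $t_j\le r+s=1/d$; (iii) in two distinct blocks $M_{j_1},M_{j_2}$, bounded by $t_{j_1}+t_{j_2}\le 2r+(p_{j_1}+p_{j_2})+(q_{j_1}+q_{j_2})\le 2r+2s=2/d$; (iv) one in some $M_j$ and one among the copies, bounded by $t_j+\max_i|b_i|^2\le r+2s=2/d-r$. All four bounds are at most $2/d\le 1/2$ for $d\ge 4$, which is \eqref{eq3}. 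The main obstacle in this plan is recognizing that the $2d\times 2d$ block $M$ decomposes further into commuting $2\times 2$ blocks---this is where the assumption $B=D_BP_B$ is really used, and without it the proof would require a much more delicate spectral estimate; the remaining case analysis is routine, and the bounds come out sharp precisely at $d=4$.
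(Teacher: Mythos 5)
Your proposal is correct and follows essentially the same route as the paper: the decomposition $X^\dag X\cong M\oplus (B^\dag B)^{\oplus(d-2)}$ is the paper's $Y_1\oplus Y_2$, and your $2\times 2$ blocks $M_j$ have exactly the quadratic characteristic polynomials appearing in the paper's factorization \eqref{eigenp:x}, both arguments landing on the bound $2/d\le 1/2$. The only cosmetic difference is the final estimate: the paper bounds the larger root of each quadratic via its discriminant ($\sqrt{\delta_j}\le 1/d$) after arguing the top two eigenvalues come from the quadratic factors, whereas you bound each larger root by the trace $t_j$ and dispose of the $B^\dag B$ copies by a short case analysis.
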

\begin{proof}
For $\rank A=1$, it is safe to fix the only nonzero element $a$ of $A$ in the second entry of the first line of $A$, since there exists proper permutation matrix $P$ such that the only nonzero entry of matrix $PAP^\dg$ is the second entry of the first line for any $A$ with $\rank=1$. Let $B=D_BP_B$, where $D_B=\diag(b_1,\cdots,b_d)$ and $P_B$ is a permutation matrix with zero-diagonal. By calculation we have $B^\dg B=P^\dg_BD^\dg_BD_BP_B=P^\dg_B
\bma
\abs{b_1}^2 &   &  \\
             &\ddots & \\
             &   & \abs{b_d}^2
\ema P_B=\diag(\abs{b_{i_1}}^2,\cdots,\abs{b_{i_d}}^2)$, where $(i_1,\cdots,i_d)=\sigma(1,\cdots,d)$ with the constraint $i_j\neq j $, $\forall j\in\{1,2,\cdots,d\}$ and $\sigma$ here is a permutation. Then, one can show $X^\dg X=Y_1\oplus Y_2$, where  
$
Y_1=\bma
B^\dg B & aB^\dg \\
\bar{a}B& \abs{a}^2I+B^\dg B
\ema$ and 
$Y_2=\bma
B^\dg B & \cdots & 0  \\
\vdots  & \ddots & \vdots  \\
0 & \cdots & B^\dg B
\ema$. By calculation we obtain the eigenpolynomial of $X^\dg X$ as follow
\begin{equation}
\begin{aligned}
\label{eigenp:x}
\abs{\lambda I-X^\dg X}=[\prod_{j=1}^d(\lambda-\abs{b_j}^2)]^{d-2}[\prod_{j=1}^d(\lambda^2-(\abs{a}^2+\abs{b_j}^2+\abs{b_{i_j}}^2)\lambda+\abs{b_j}^2\abs{b_{i_j}}^2)].
\end{aligned}
\end{equation}

One can show that $\lambda^2-(\abs{a}^2+\abs{b_j}^2+\abs{b_{i_j}}^2)\lambda+\abs{b_j}^2\abs{b_{i_j}}^2\leq 0$ when $\lambda=\abs{b_j}^2$, which implies the largest two eigenvalues of $X^\dg X$ must be the roots of the second product of \eqref{eigenp:x}. Hence, the sum of the largest two eigenvalues in \eqref{eigenp:x} can be expressed as follow
\begin{equation}
\begin{aligned}
\label{eq:eigensum1}
\lambda_1+\lambda_2=\max_{j\neq k}(\frac{(\abs{a}^2+\abs{b_j}^2+\abs{b_{i_j}}^2+\sqrt{\delta_j})+(\abs{a}^2+\abs{b_k}^2+\abs{b_{i_k}}^2+\sqrt{\delta_k})}{2}),
\end{aligned} 
\end{equation}
where $\delta_j=(\abs{a}^2+\abs{b_j}^2+\abs{b_{i_j}}^2)^2-4\abs{b_j}^2\abs{b_{i_j}}^2$ and $\delta_k=(\abs{a}^2+\abs{b_k}^2+\abs{b_{i_k}}^2)^2-4\abs{b_k}^2\abs{b_{i_k}}^2$. We have $\sqrt{\delta_j}\leq\frac{1}{d}$ and $\sqrt{\delta_k}\leq\frac{1}{d}$, $\forall j,k$ which follow from Eq. \eqref{eq:transformtra1}. Hence, Eq. \eqref{eq:eigensum1} implies the sum of the largest two  eigenvalues of $X^\dg X$ is at most $\frac{2}{d}$ which isn't greater than $\frac{1}{2}$ for $d\geq  4$. This completes the proof.  
\end{proof}

Using the statement of Lemma \ref{le:cj} (iv), we can make the same conclusion if $\rank B=1$. Hence, Eq. \eqref{eq3} holds when $A,B$ satisfy Definition \ref{df:nonnormal} and one of them has rank one. 

We have seen that it is not easy to characterize the eigenpolynomial of $X^\dg X$. In the following lemma, we use Gershgorin circle theorem and Brauer theorem to study Conjecture \ref{cj:main}.
They are two important theorems in the field of localization of eigenvalues to localize the largest two eigenvalues. The following fact will be used in the proof of Lemma \ref{le:ada}.

It follows from \eqref{eq:x} that $X^\dg X=H_1+H_2$ where
\begin{eqnarray}
\label{eq:x2}
H_1&:=&A^\dg A\otimes I+I\otimes B^\dg B,
\notag\\
H_2&:=&A^\dg\otimes B+A\otimes B^\dg.
\end{eqnarray}
Furthermore, the first equation of \eqref{eq:tra} implies that $\tr H_2=0$, and the second equation of \eqref{eq:tra} implies that $\tr X^\dg X=\tr H_1=\sum^{d^2}_{j=1}\sigma_j^2=1$. So $X^\dg X$ can be regarded as a normalized quantum state in terms of quantum physics.

\begin{lemma}
\label{le:ada}
Eq. \eqref{eq3} holds for $d\geq 5$ when $A$ and $B$ satisfy Definition \ref{df:nonnormal}.
\end{lemma}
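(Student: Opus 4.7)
The plan is to apply the Gershgorin circle theorem (Lemma~\ref{le:cirth}) to the decomposition $X^\dg X = H_1 + H_2$ given just before the lemma, and then bound every Gershgorin disc by a single constant. Since $A=D_A P_A$ with $P_A$ a fixed-point-free permutation matrix, $A^\dg A = P_A^\dg D_A^\dg D_A P_A$ is diagonal with entries $|a_{\pi_A^{-1}(i)}|^2$; similarly for $B^\dg B$. Setting $\tilde a_i := a_{\pi_A^{-1}(i)}$ and $\tilde b_j := b_{\pi_B^{-1}(j)}$, the diagonal of $X^\dg X$ at position $(i,j)$ equals $|\tilde a_i|^2 + |\tilde b_j|^2$. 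The off-diagonal part $H_2 = A^\dg\ox B + A\ox B^\dg$ has zero diagonal and at most two nonzero entries in each row $(i,j)$, of magnitudes $|\tilde a_i|\,|b_j|$ and $|a_i|\,|\tilde b_j|$, located at columns $(\pi_A^{-1}(i),\pi_B(j))$ and $(\pi_A(i),\pi_B^{-1}(j))$ respectively.

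Gershgorin then yields, for every eigenvalue $\lambda$ of $X^\dg X$, some index $(i,j)$ with
\[
\lambda \le U_{(i,j)} := |\tilde a_i|^2 + |\tilde b_j|^2 + |\tilde a_i|\,|b_j| + |a_i|\,|\tilde b_j|.
\]
Because $\pi_A$ and $\pi_B$ are fixed-point-free, $|a_i|^2$ and $|\tilde a_i|^2$ are two distinct terms of $\sum_k|a_k|^2$, and likewise $|b_j|^2$ and $|\tilde b_j|^2$ are two distinct terms of $\sum_k|b_k|^2$, so \eqref{eq:transformtra1} forces
\[
|\tilde a_i|^2+|\tilde b_j|^2+|a_i|^2+|b_j|^2 \le 1/d.
\]
I would then maximize $U_{(i,j)}$ under this constraint. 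Setting $(p,q,r,s):=(|\tilde a_i|,|\tilde b_j|,|a_i|,|b_j|)$, the quadratic form $p^2+q^2+ps+qr$ block-diagonalizes, in the variable order $(p,s,q,r)$, into two identical $2\times 2$ blocks $\bigl(\begin{smallmatrix}1 & 1/2\\ 1/2 & 0\end{smallmatrix}\bigr)$, whose largest eigenvalue is the positive root $(1+\sqrt 2)/2$ of $\lambda^2-\lambda-1/4=0$. Consequently $U_{(i,j)} \le (1+\sqrt 2)/(2d)$ for every $(i,j)$.

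Combining this uniform disc bound with the trivial estimate $\sigma_1^2+\sigma_2^2 \le 2\sigma_1^2 \le 2\max_{(i,j)}U_{(i,j)}$ gives
\[
\sigma_1^2+\sigma_2^2 \le \frac{1+\sqrt 2}{d},
\]
which is at most $1/2$ precisely when $d\ge 2(1+\sqrt 2)\approx 4.83$, i.e.\ whenever $d\ge 5$. The only non-routine step is the constrained maximization that produces the sharp constant $(1+\sqrt 2)/2$; everything else reduces to tracking the sparsity pattern of $X^\dg X$. Brauer's theorem (Lemma~\ref{le:ovalth}) could be invoked to refine the analysis when two nearly-maximal discs are close together, but it is not needed for $d\ge 5$ because the factor-of-two slack from $\sigma_2^2\le\sigma_1^2$ is already absorbed by the constant $(1+\sqrt 2)/2 < 5/4$.
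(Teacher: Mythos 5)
Your argument is correct, and it improves on the paper's treatment of this lemma. The setup is the same as the paper's: both start from $X^\dg X=H_1+H_2$ with $H_1$ diagonal with entries $|\tilde a_i|^2+|\tilde b_j|^2$ and $H_2$ having at most two off-diagonal entries per row of magnitudes $|\tilde a_i||b_j|$ and $|a_i||\tilde b_j|$, and both apply Gershgorin to get $\lambda_1\le\max_{(i,j)}U_{(i,j)}$. The divergence is in how $U_{(i,j)}$ is bounded. The paper uses the crude estimate $|\tilde a_i||b_j|+|a_i||\tilde b_j|\le\tfrac{1}{2}(|\tilde a_i|^2+|b_j|^2+|a_i|^2+|\tilde b_j|^2)\le\tfrac{1}{2d}$ together with $|\tilde a_i|^2+|\tilde b_j|^2\le\tfrac{1}{d}$, obtaining $\lambda_1\le\tfrac{3}{2d}$ and hence $\lambda_1+\lambda_2\le\tfrac{3}{d}$, which only settles $d\ge6$; the case $d=5$ then requires a separate and considerably heavier argument combining Weyl's inequality (Lemma \ref{le:weylcor}), Brauer's ovals (Lemma \ref{le:ovalth}), a trigonometric reparametrization, and ultimately a numerical verification of a two-variable optimization. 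You instead maximize the quadratic form $p^2+q^2+ps+qr$ exactly under the (correctly justified, via the fixed-point-free property of $P_A,P_B$) constraint $p^2+q^2+r^2+s^2\le 1/d$; the block-diagonalization into two copies of $\bigl(\begin{smallmatrix}1&1/2\\1/2&0\end{smallmatrix}\bigr)$ is right, the top eigenvalue is $(1+\sqrt2)/2$, and so $\lambda_1\le(1+\sqrt2)/(2d)$ and $\lambda_1+\lambda_2\le(1+\sqrt2)/d\le1/2$ exactly when $d\ge 2+2\sqrt2\approx4.83$, i.e.\ for all integers $d\ge5$. This buys a uniform, fully rigorous proof covering $d=5$ and $d\ge6$ in one stroke, eliminating the paper's reliance on a numerical check; the only price is that, like the paper's bound, it gives nothing for $d=4$, which is consistent with the lemma's hypothesis. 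One small point worth recording explicitly in a write-up: when the two off-diagonal entries of $H_2$ in a given row land in the same column (which happens when $\pi_A^2(i)=i$ and $\pi_B^2(j)=j$), they add, but the triangle inequality keeps the Gershgorin radius bounded by $|\tilde a_i||b_j|+|a_i||\tilde b_j|$, so the estimate is unaffected.
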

\begin{proof}
$H_1$ in Eq. \eqref{eq:x2} is diagonal and $H_2$ in Eq. \eqref{eq:x2} is a Hermitian matrix with zero-diagonal when $A$ and $B$ satisfy Definition \ref{df:nonnormal}. There exist two permutations $\sigma$ and $\tau$ with $\sigma(k)\neq k,\tau(k)\neq k,\forall k\in\{1,\cdots,d\}$ which are respectively equivalent to $P_A$ and $P_B$. We find that the $(k,\sigma(k))$ entry of $A$ is $a_k$ and the $(k,\tau(k))$ entry of $B$ is $b_k$. So the $(\sigma(k),k)$ entry of $A^\dg$ is $a_k^*$ and the $(\tau(k),k)$ entry of $B^\dg$ is $b_k^*$. They imply that exactly two entries in each row of $H_2$ can be expressed with $a_i,b_j$ and their conjugates, for $i,j\in\{1,\cdots,d\}$. It implies that the two elements $a_{\sigma^{-1}(i)}^*b_j$ which is the $\big(d(i-1)+j,d(\sigma^{-1}(i)-1)+\tau(j)\big)$ entry of $X^\dg X$ and $a_ib_{\tau^{-1}(j)}^*$ which is the $\big(d(i-1)+j,d(\sigma(i)-1)+\tau^{-1}(j)\big)$ entry of $X^\dg X$ are both in the $(d(i-1)+j)$'th row of $X^\dg X$ and also are non-diagonal entries of $X^\dg X$. Further the diagonal entry of $X^\dg X$ in this row is $(\abs{a_{\sigma^{-1}(i)}}^2+\abs{b_{\tau^{-1}(j)}}^2)$. Recall that $\sigma(i)\neq i,\tau(i)\neq i,\forall i\in\{1,\cdots,d\}$. Eq. \eqref{eq:spec11} implies that the largest eigenvalue $\lambda_1$ of $X^\dg X$  satisfies 
\begin{eqnarray}
\label{eq:eigensum21}
\lambda_1
&\leq &\max\limits_{i,j}\big(\abs{a_{\sigma^{-1}(i)}}^2+\abs{b_{\tau^{-1}(j)}}^2+\abs{a_{\sigma^{-1}(i)}}\abs{b_j}+\abs{a_i}\abs{b_{\tau^{-1}(j)}}\big),
\end{eqnarray}
 and the fact $\sigma(k)\neq k,\tau(k)\neq k,\forall k\in\{1,\cdots,d\}$. Applying the basic inequality, we obtain $\abs{a_p}\abs{b_q}+\abs{a_s}\abs{b_t}\leq (\abs{a_p}^2+\abs{b_q}^2+\abs{a_s}^2+\abs{b_t}^2)/2$. Since $p\neq s$ and $q\neq t$, the constraint \eqref{eq:transformtra1} implies that $\abs{a_p}\abs{b_q}+\abs{a_s}\abs{b_t}\leq \frac{1}{2d}$. The constraint \eqref{eq:transformtra1} also implies that $\abs{a_i}^2+\abs{b_j}^2\leq\frac{1}{d}$. Hence, we have $\lambda_1\leq\frac{3}{2d}$ and thus $\lambda_1+\lambda_2\leq\frac{3}{d}$. It implies Conjecture \ref{cj:main} holds for $d\geq 6$. Next we will prove Conjecture \ref{cj:main} holds for $d=5$.

Let's recall \eqref{eq:x2}. Eq. \eqref{eq:eivaineq1} implies the second largest eigenvalue of $X^\dg X$ satisfy $\lambda_2(X^\dg X)\leq\lambda_2(H_1)+\lambda_1(H_2)$, where $\lambda_2(H_1)$ means the second largest eigenvalue of $H_1$ and $\lambda_1(H_2)$ means the largest eigenvlaue of $H_2$. We find $\lambda_2(H_1)$ is the second largest diagonal element of $H_1$. Applying Lemma \ref{le:ovalth} to $H_2$, one can show that $\lambda_1(H_2)$ should not greater than the largest root of these quadratic polynomials $\lambda^2=\sum\limits_{k=1\atop k\neq k_1}^n\abs{X^\dg X_{(k_1,k)}}\sum\limits_{k=1\atop k\neq k_2}^n\abs{X^\dg X_{(k_2,k)}}$ for $k_1\neq k_2$. Suppose $k_1=d(i-1)+j$ and $k_2=d(p-1)+q$ with $(i,j)\neq(p,q)$. Hence, we can bound $\lambda_1(X^\dg X)+\lambda_2(X^\dg X)$ as follow.
\begin{equation}
\begin{aligned}
\label{eq:eigensum31}
(\lambda_1+\lambda_2)\leq\max\limits_{(i,j)\neq(p,q)}\Bigg(\abs{a_{\sigma^{-1}(i)}}^2+\abs{b_{\tau^{-1}(j)}}^2+\abs{a_{\sigma^{-1}(p)}}^2+\abs{b_{\tau^{-1}(q)}}^2
+2\sqrt{c}\Bigg),
\end{aligned}
\end{equation}
where $c=\big(\abs{a_{\sigma^{-1}(i)}}\abs{b_j}+\abs{a_i}\abs{b_{\tau^{-1}(j)}}\big)\big(\abs{a_{\sigma^{-1}(p)}}\abs{b_q}+\abs{a_p}\abs{b_{\tau^{-1}(q)}}\big)$.

Suppose $\abs{a_i}$ and $\abs{b_j}$ be in the decreasing order. In order to obtain the upper bound of Eq. \ref{eq:eigensum31}, it is safe to let $x_1=\abs{a_1}$, $x_2=\abs{a_2}$, $x_3=\abs{b_1}$ and $x_4=\abs{b_2}$ and other $\abs{a_i}$ and $\abs{b_j}$ all equal zero. Then our problem can be transformed into an optimization task as follow.
\begin{equation}
\begin{aligned}
\label{eq:opt1}
\max &\quad f(x_1,x_2,x_3,x_4)=2x_1^2+x_3^2+x_4^2+2\sqrt{(x_1x_3+x_2x_4)(x_1x_4+x_2x_3)}\\
s.t. &\quad x_i>0,\quad i=1,2,3,4,  \\
     &\quad \sum_{i=1}^4x_i^2=\frac{1}{d},\\
     &x_1-x_2\geq 0,\quad x_3-x_4\geq 0, \\
     &x_1^2-x_2^2\geq x_3^2-x_4^2.
\end{aligned}
\end{equation}

Since $\sum\limits_{i=1}^4x_i^2=\frac{1}{d}$, it is safe to let $x_1=\sqrt{\frac{1}{d}}(\cos a\cos b)$, $x_2=\sqrt{\frac{1}{d}}(\cos a\sin b)$, $x_3=\sqrt{\frac{1}{d}}(\sin a\cos c)$ and $x_4=\sqrt{\frac{1}{d}}(\sin a\sin c)$. Then, we obtain $f=\frac{1}{4d}\big(4+\cos2(a-b)+2\cos2b+\cos2(a+b)+2\sqrt{2}\sqrt{\sin^22a(\sin2b+\sin2c)}\big)$, which implies $f$ obtains its maximum only when $\sin2c=1$. Furthermore, $\sin2c=1$ implies $x_3=x_4$. Substituting $x_3=x_4$ and $\sum\limits_{i=1}^4x_i^2=\frac{1}{d}$ into $f(x_1,x_2,x_3,x_4)=2x_1^2+x_3^2+x_4^2+2\sqrt{(x_1x_3+x_2x_4)(x_1x_4+x_2x_3)}$, we can transform $f$ into a function with two variables, that is $f(x_1,x_2)=\frac{1}{d}+x_1^2-x_2^2+2(x_1+x_2)\sqrt{\frac{1}{2d}-\frac{1}{2}(x_1^2+x_2^2)}$. Let $d=5$ and the numberical result shows that even though there is no constraint $x_1\geq x_2$, $f$ is also upper bounded by $\frac{1}{2}$. 

To conclude all dimensions $d\geq5$ have been studied. This completes the proof.
\end{proof}

\section{Conjecture \ref{cj:main} with normal $X$ and $d\ne4$}
\label{sec:normal}

Reference \cite{pphh2010} investigated Conjecture \ref{cj:main} for normal matrices with $d=4$, as we have introduced in Lemma \ref{le:d=4}. In this section we extend Lemma \ref{le:d=4} to higher dimensions, so that Conjecture \ref{cj:main} is of more mathematical interest apart from its physical connection to the distillability problem.
\begin{lemma}
\label{le:d>4}
Lemma \ref{le:d=4} holds when $\cN_4$ is replaced by $\cN_d$ with $d>4$. 
\end{lemma}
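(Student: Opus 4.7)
The plan is to invoke Lemma~\ref{le:cj}(iii) to assume $A=\diag(a_1,\ldots,a_d)$ and $B=\diag(b_1,\ldots,b_d)$, whence $X=A\ox I+I\ox B$ is itself diagonal with entries $a_i+b_j$. The singular values of $X$ are therefore the $d^2$ numbers $|a_i+b_j|$, and I must bound $\sigma_1^2+\sigma_2^2$ where the two largest correspond to two distinct index pairs $(i_1,j_1)\ne(i_2,j_2)$. I would split into three cases: (A) $i_1=i_2$, $j_1\ne j_2$; (B) $i_1\ne i_2$, $j_1=j_2$; and (C) $i_1\ne i_2$, $j_1\ne j_2$. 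Cases (A) and (B) are symmetric via Lemma~\ref{le:cj}(iv), so I only need to treat (A) and (C).

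Case (C) is immediate: the elementary bound $|a+b|^2\le 2(|a|^2+|b|^2)$ combined with \eqref{eq:tra} gives
$$\sigma_1^2+\sigma_2^2\le 2\big(|a_{i_1}|^2+|b_{j_1}|^2+|a_{i_2}|^2+|b_{j_2}|^2\big)\le 2\tr(A^\dg A+B^\dg B)=\frac{2}{d},$$
which is at most $1/2$ whenever $d\ge 4$. For case (A), write $i_1=i_2=:i$; the objective $|a_i+b_{j_1}|^2+|a_i+b_{j_2}|^2$ depends only on $a_i,b_{j_1},b_{j_2}$, so I would first adjust phases (this preserves \eqref{eq:tra} and can only increase the objective) to reduce to real nonnegative values. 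Then I would use the trace-zero conditions $\sum_{k\ne i}a_k=-a_i$ and $\sum_{k\ne j_1,j_2}b_k=-(b_{j_1}+b_{j_2})$ together with Cauchy--Schwarz to obtain
$$\sum_{k\ne i}|a_k|^2\ge\frac{|a_i|^2}{d-1},\qquad \sum_{k\ne j_1,j_2}|b_k|^2\ge\frac{|b_{j_1}+b_{j_2}|^2}{d-2},$$
and feed these into \eqref{eq:tra}. A Lagrange-multiplier argument shows that the constrained maximum is attained at $b_{j_1}=b_{j_2}$, and a one-parameter trigonometric optimization then yields the sharp bound $\sigma_1^2+\sigma_2^2\le(3d-4)/d^2$. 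Since $(3d-4)/d^2\le 1/2$ is equivalent to $(d-2)(d-4)\ge 0$, the inequality holds for all $d\ge 4$, with equality precisely at $d=4$ (recovering Lemma~\ref{le:d=4} as the tight case).

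The main obstacle is this collinear case~(A): a naive bound $\sigma_1^2+\sigma_2^2\le 4|a_i|^2+2(|b_{j_1}|^2+|b_{j_2}|^2)\le 4/d$ is not below $1/2$ for $d\in\{5,6,7\}$. The extra leverage needed comes from the trace-zero condition $\tr B=0$, which, via Cauchy--Schwarz on the $d-2$ remaining diagonal entries of $B$, sharpens the bound to $(3d-4)/d^2$ and closes the gap for every $d>4$.
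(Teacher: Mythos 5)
Your proposal is correct and follows essentially the same route as the paper: reduce the normal case to the eigenvalues $a_i+b_j$, dispose of the case of distinct row and column indices with $\abs{x+y}^2\le 2(\abs{x}^2+\abs{y}^2)$ to get the bound $2/d$, and handle the collinear case via the sharp bound $(3d-4)/d^2$. The only difference is that you sketch a direct derivation of that sharp bound (Cauchy--Schwarz on the remaining trace-zero entries plus a one-parameter optimization), whereas the paper simply invokes Proposition \ref{pp:ab} from \cite{pphh2010}; the content is the same.
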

\begin{proof}
The part in the proof of Lemma \ref{le:d=4} from the beginning to \eqref{eq13} applies here. 
This part applies to any $d>4$, and Eq. \eqref{eq13} is the first place in the proof of Lemma \ref{le:d=4} in which $d=4$ appears. Based on these facts, we begin our proof with $d>4$. According to \eqref{eq13}, we have
\begin{equation}
\begin{aligned}
\label{eq17}
\abs{a_1+b_1}^2+\abs{a_2+b_2}^2&\leq 2(\abs{a_1}^2+\abs{b_1}^2+\abs{a_2}^2+\abs{b_2}^2)
\leq\frac{2}{d}<\frac{1}{2}.
\end{aligned}
\end{equation}
Proposition \ref{pp:ab} implies that 
\eqref{eq15} is satisfied. So we have
\begin{equation}
\begin{aligned}
\label{eq18} 
\abs{a_1+b_1}^2+\abs{a_1+b_2}^2\leq\frac{3d-4}{d^2}<\frac{1}{2}.
\end{aligned}
\end{equation}
The two inequalities in \eqref{eq10} and \eqref{eq11} are saturated. So Lemma \ref{le:d=4} holds for $d\geq 4$.
\end{proof}

Next we show that Lemma \ref{le:d=4} no longer holds when $d=4$ is replaced by $d=3$. 

\begin{lemma}
\label{le:d=3}
Let $\chi_d$ be a subset of normal operators $X$ in \eqref{eq:x} satisfying constraints \eqref{eq:tra}. Then for $d=3$, we have 
\begin{equation}
\begin{aligned}
\label{eq19}
\frac{5}{9}\leq\sup_{X\in\chi_d}(\sigma_1^2+\sigma_2^2)\leq\frac{2}{3}
\end{aligned}
\end{equation}
where $\sigma_1$ and $\sigma_2$ are the two largest singular values of operator X.
\end{lemma}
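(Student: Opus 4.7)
My plan is to first invoke Lemma~\ref{le:cj}(iii) to simultaneously diagonalize $A$ and $B$, writing $A=\diag(a_1,a_2,a_3)$, $B=\diag(b_1,b_2,b_3)$ with $\sum_i a_i=\sum_i b_i=0$ and $\sum_i(\abs{a_i}^2+\abs{b_i}^2)=1/3$. Then $X$ is itself diagonal with entries $a_i+b_j$, so the singular values of $X$ are the nine numbers $\abs{a_i+b_j}$, and the problem reduces to bounding the sum of the two largest of these squared.

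For the lower bound I would exhibit the explicit example $A=B=\diag(1/3,-1/6,-1/6)$. A direct check confirms the trace conditions, and the two largest values of $\abs{a_i+b_j}$ are $2/3$ and $1/3$, giving $\sigma_1^2+\sigma_2^2=4/9+1/9=5/9$.

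For the upper bound $\sigma_1^2+\sigma_2^2\le 2/3$, my central tool would be the $d=3$ entrywise estimate $\abs{a_i}^2\le\frac{2}{3}\sum_k\abs{a_k}^2$ (and analogously for $B$), which follows by Cauchy--Schwarz applied to $a_i=-\sum_{k\ne i}a_k$. Writing $\sigma_1^2=\abs{a_i+b_j}^2$ and $\sigma_2^2=\abs{a_k+b_l}^2$ with $(i,j)\ne(k,l)$, I would split into three exhaustive cases. Case (A), $i\ne k$ and $j\ne l$: immediate from $\abs{u+v}^2\le 2\abs{u}^2+2\abs{v}^2$, since the four resulting $\abs{a_p}^2,\abs{b_q}^2$ sum to at most $\sum_p(\abs{a_p}^2+\abs{b_p}^2)=1/3$. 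Case (B), $i=k$: using $b_j+b_l=-b_m$ where $m$ is the remaining index, I would expand $\abs{a_i+b_j}^2+\abs{a_i+b_l}^2=2\abs{a_i}^2+\abs{b_j}^2+\abs{b_l}^2-2\Re(a_i\bar{b}_m)$, bound the cross term by $\abs{a_i}^2+\abs{b_m}^2$, and obtain $3\abs{a_i}^2+\sum_p\abs{b_p}^2\le 2\sum_p\abs{a_p}^2+\sum_p\abs{b_p}^2=\sum_p\abs{a_p}^2+1/3\le 2/3$. Case (C), $j=l$, is symmetric to (B) with the roles of $A$ and $B$ interchanged.

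The main obstacle is comparatively mild: the upper bound $2/3$ is not sharp (the true supremum appears to be $5/9$, attained by the example above), so the crude case estimates close with margin. The only step that genuinely uses $d=3$ is the preliminary inequality $\abs{a_i}^2\le\frac{2}{3}\sum_k\abs{a_k}^2$; the analogous bound at $d=4$ is only $\frac{3}{4}$ and Case (B) no longer closes, which is consistent with Lemma~\ref{le:d=4} lying genuinely deeper.
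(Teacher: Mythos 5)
Your proposal is correct. The overall skeleton matches the paper's: for normal $X$ the singular values are the moduli of $a_i+b_j$, and the problem splits into the ``distinct indices'' configuration (handled identically in both arguments via $\abs{u+v}^2\le 2\abs{u}^2+2\abs{v}^2$, giving $2/3$) and the ``shared index'' configuration. The difference is in how the shared-index case and the lower bound are treated. The paper imports Proposition~\ref{pp:ab} from \cite{pphh2010} as a black box, which gives the \emph{exact} maximum $(3d-4)/d^2=5/9$ for $\abs{a_1+b_1}^2+\abs{a_1+b_2}^2$; this single citation simultaneously yields the upper bound for that case and the lower bound $5/9\le\sup$ (the paper then records an explicit saturating instance after the lemma). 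You instead prove the shared-index case from scratch with the elementary $d=3$ inequality $\abs{a_i}^2\le\tfrac23\sum_k\abs{a_k}^2$ and the zero-trace identity $b_j+b_l=-b_m$, obtaining only the cruder bound $2/3$ rather than the sharp $5/9$, and you supply your own explicit example $A=B=\diag(1/3,-1/6,-1/6)$ for the lower bound (which checks out: the trace conditions hold and the two largest moduli are $2/3$ and $1/3$). What your route buys is self-containedness — no appeal to Proposition~\ref{pp:ab} — and a transparent explanation of why $d=3$ is special; what it loses is the exact value $5/9$ for the shared-index configuration, which is not needed for the statement as written but is the content that makes the paper's lower bound automatic. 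Both arguments are complete and valid.
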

\begin{proof}
According to \eqref{eq12}, we have
\begin{equation}
\begin{aligned}
\label{eq:tra0}
\abs{a_1+b_1}^2+\abs{a_2+b_2}^2&\leq 2(\abs{a_1}^2+\abs{b_1}^2+\abs{a_2}^2+\abs{b_2}^2)
\leq\frac{2}{d}=\frac{2}{3}.
\end{aligned}
\end{equation}
Proposition \ref{pp:ab} implies that
\begin{equation}
\begin{aligned}
\label{eq:tra1}
\max(\abs{a_1+b_1}^2+\abs{a_1+b_2}^2)=\frac{3d-4}{d^2}=\frac{5}{9}.
\end{aligned}
\end{equation}
Due to the relationship \eqref{eq9}, we obtain \eqref{eq19} for $d=3$.
\end{proof}

Let $a_1=\frac{4}{3\sqrt{10}}$, $a_2=\frac{-2}{3\sqrt{10}}$, $a_3=\frac{-2}{3\sqrt{10}}$ and $b_1=\frac{1}{3\sqrt{10}}$, $b_2=\frac{1}{3\sqrt{10}}$, $b_3=\frac{-2}{3\sqrt{10}}$. They satisfy \eqref{eq:tra} and saturate the first inequality in \eqref{eq19}.

\section*{Acknowledgments}

YS and LC were supported by Beijing Natural Science Foundation (4173076), the NNSF of China (Grant No. 11501024), and the Fundamental Research Funds for the Central Universities (Grant Nos. KG12001101, ZG216S1760 and ZG226S17J6). 

\appendix

\section{mathematical description of the distillability problem}
\label{app:dis}

Let $\cH=\cH_A\ox\cH_B$ be the bipartite Hilbert space with $\dim\cH_A=M$ and $\dim\cH_B=N$. In quantum physics,
the quantum state is a positive semidefinite matrix. For the sake of normalization in quantum physics, it is required that every quantum state be a unit vector. However the requirement does not play the essential in the distillability problem, and often causes  inconvenience in the mathematical expressions and discussion. In this paper, unless stated otherwise, the states will not
be normalized.
  
We shall work with the quantum state $\rho$ on $\cH$. Such $\rho$ is called a bipartite state of system $A$ and $B$. We have $\rho=\sum^M_{i,j=1}E_{ij}\ox \rho_{ij}$, where $E_{ij}$ is an $M\times M$ matrix whose elements are all zero, except that the $(i,j)$ entry is one. The partial transpose of $\rho$ with respect to the system $A$ is defined as $\rho^{\G}:=\sum^M_{i,j=1}E_{ji}\ox\rho_{ij}$. We say that $\rho$ is positive partial transpose (PPT) if $\rho^{\Gamma}\ge0$. Otherwise $\rho$ is negative partial transpose (NPT), i.e., $\rho^{\Gamma}$ has at least one negative eigenvalue. The NPT states are entangled states due to the
Peres-Horodecki criterion in quantum information \cite{peres1996,hhh96}. We say that a quantum state is pure when it has rank one.

Since the distillability problem requires many copies of the same states, we need further the concept of composite system.
Let $\rho_{A_iB_i}$ be an $M_i\times N_i$ state of rank $r_i$ acting
on the Hilbert space $\cH_{A_i}\ox\cH_{B_i}$, $i=1,2$, with $\dim\cH_{A_i}=M_i$ and $\dim\cH_{B_i}=N_i$. Suppose
$\rho$ of systems $A_1,A_2$ and $B_1,B_2$ is a state on the
Hilbert space
$\cH_{A_1}\ox\cH_{B_1}\ox\cH_{A_2}\ox\cH_{B_2}$, such that the partial trace $\tr_{A_1B_1}\rho=\rho_{A_2B_2}$ and
$\tr_{A_2B_2}\r=\rho_{A_1B_1}$. By
switching the two middle factors, we can regard $\rho$ as a
\textit{composite} bipartite state on the Hilbert space
$\cH_A\ox\cH_B$ where $\cH_A=\cH_{A_1}\ox\cH_{A_2}$ and
$\cH_B=\cH_{B_1}\ox\cH_{B_2}$. We write
$\rho=\rho_{A_1A_2:B_1B_2}$. One can verify that $\rho$ is an $M_1M_2\times N_1N_2$
state of rank at most $r_1r_2$. For example the
\textit{tensor product} $\rho=\rho_{A_1B_1}\ox\rho_{A_2B_2}$ is an $M_1M_2\times N_1N_2$ state of rank $r_1r_2$.
The above definition can be generalized to the tensor product
of $N$ states $\rho_{A_iB_i},i=1,\ldots,N$. They form a bipartite
state on the Hilbert space
$\cH_{A_1,\cdots,A_N}\ox\cH_{B_1,\cdots,B_N}$. It is written as $\cH^{\ox n}$ with $\cH_{A_i}\ox\cH_{B_i}=\cH$. 

Third we shall refer to the notations $\ket{\ps}$ and $\bra{\ps}$ in quantum physics respectively as a column vector and its conjugate transpose in linear algebra. In quantum information, the well-known Werner state in $\bbC^d\ox\bbC^d$ is defined as ${I+\alpha \sum^{d}_{i,j=1}E_{ij}\ox E_{ji} \over d^2+\alpha d}$, where the real number $\alpha\in[-1,1]$ \cite{werner89}.
We introduce the definition of distillable states as follows \cite{dss00}.
\begin{definition}
 \label{def:distillation}
A bipartite state $\rho$ is {\em $n$-distillable} under local operations and classical communications if there
exists a Schmidt-rank-two bipartite state $\ket{\ps}\in\cH^{\ox n}$ such
that $\bra{\ps} ({\rho^{\ox n}})^\Gamma \ket{\ps}<0$. Otherwise we say
that $\rho$ is $n$-undistillable. We say that $\rho$ is {\em
distillable} if it is $n$-distillable for some $n\geq1$. 
\qed
\end{definition}
The definition shows that PPT states are not distillable. It has been shown \cite{dss00} that all NPT bipartite states can be locally converted into NPT Werner states. Using Definition \ref{def:distillation}, one can show that the distillability of NPT Werner states is equivalent to that with $\alpha=-1/2$. 
So the distillability problem indeed asks whether Werner states with $\alpha=-1/2$ are distillable. 

Now we can explain the special case proposed in \cite{pphh2010}. It means that Conjecture \ref{cj:main} with $d=4$ is equivalent to the 2-undistillability of Werner states in $\bbC^4\ox\bbC^4$ with $\alpha=-1/2$. Definition \ref{def:distillation} shows that if Conjecture \ref{cj:main} with $d=4$ was true, it is still possible that Werner states might be $n$-distillable with some integer $n>2$. However it is widely believed that Werner states with $\alpha=-1/2$ may be not distillable \cite{dcl00,pphh2010,dss00, Hiroshima2008Bound, Hiroshima2012A}.

\section{the proof of Lemma \ref{le:d=4}}
\label{app:lemma1}
\begin{proof}
First we can see the operator X of the form \eqref{eq:x} is normal iff operators A and B are normal. Since X is normal which means X is diagonalizable, then we can replace singular values with moduli of eigenvalues, which means
\begin{equation}
\begin{aligned}
\label{eq5}
\lambda_{ij}=a_i+b_j
\end{aligned}
\end{equation}
where $a_i$ and $b_j$ are eigenvalues of A and B, respectively, and $\lambda_{ij}$ are eigenvalues of X. We then have
\begin{equation}
\begin{aligned}
\label{eq6}
\sup_{X\in\chi_d}(\sigma_1^2+\sigma_2^2)&=\sup_{X\in\chi_d}(\abs{\lambda_1}^2+\abs{\lambda_2}^2)\\
&=\sup_{X\in\chi_d}\max_{i,j,k,l\in\{1,\cdots,d\},(i,j)\neq(k,l)}(\abs{a_i+b_j}^2+\abs{a_k+b_l}^2)
\end{aligned}
\end{equation}
where $\lambda_1$ and $\lambda_2$ are two eigenvalues of X with largest moduli. The constraints \eqref{eq:tra} on X imply the following constraints on $a_i$ and $b_j$
\begin{equation}
\begin{aligned}
\label{eq7}
\sum_{i=1}^da_i=\tr A=0,\quad \sum_{j=1}^db_j=\tr B=0,
\end{aligned}
\end{equation}
\begin{equation}
\begin{aligned}
\label{eq8}
\sum_{i=1}^d\abs{a_i}^2+\sum_{j=1}^d\abs{b_j}^2=\tr A^\dagger A+\tr B^\dagger B=\frac{1}{d}.
\end{aligned}
\end{equation}
Considering such two pairs of number $(i,j)$ and $(k,l)$ with the constraint $(i,j)\neq(k,l)$, we get the following two cases:
\begin{displaymath}
(1)\quad (i\neq k)\wedge (j\neq l);
\end{displaymath}
\begin{displaymath}
(2)\quad ((i=k)\wedge (j\neq l))\vee ((i\neq k)\wedge(j=l)).
\end{displaymath}
Due to the alternative property, we can only consider the left term or the right term of $\vee$ in the second case. Then we can go further with the \eqref{eq6} as follows.
\begin{equation}
\begin{aligned}
\label{eq9}
\sup_{X\in\chi_d}(\sigma_1^2+\sigma_2^2)&=\sup_{X\in\chi_d}\max\{\abs{a_1+b_1}^2+\abs{a_2+b_2}^2,\quad \abs{a_1+b_1}^2+\abs{a_1+b_2}^2\}.
\end{aligned}
\end{equation}
Thus, to prove the Lemma we have to show that the following inequalities hold:
\begin{equation}
\begin{aligned}
\label{eq10}
\abs{a_1+b_1}^2+\abs{a_2+b_2}^2\leq\frac{1}{2}
\end{aligned}
\end{equation}
\begin{equation}
\begin{aligned}
\label{eq11}
\abs{a_1+b_1}^2+\abs{a_1+b_2}^2\leq\frac{1}{2}
\end{aligned}
\end{equation}
under the constraints \eqref{eq7} and \eqref{eq8} with d=4. The first inequality comes easily from the identity
\begin{equation}
\begin{aligned}
\label{eq12}
\abs{x+y}^2=2(\abs{x}^2+\abs{y}^2)-\abs{x-y}^2\leq 2(\abs{x}^2+\abs{y}^2)
\end{aligned}
\end{equation}
which implies
\begin{equation}
\begin{aligned}
\label{eq13}
\abs{a_1+b_1}^2+\abs{a_2+b_2}^2&\leq 2(\abs{a_1}^2+\abs{b_1}^2+\abs{a_2}^2+\abs{b_2}^2)\\
&\leq\frac{2}{d}=\frac{1}{2}.
\end{aligned}
\end{equation}
Then the next work is to show the inequality \eqref{eq11} holds with $d=4$. It follows from the following Proposition \ref{pp:ab} proven by \cite{pphh2010}. 
\end{proof}

\begin{proposition}
\label{pp:ab}
Suppose $\overrightharp{a}$ and $\overrightharp{b}$ are $d\geq 3$ dimensional vectors with complex elements $\tilde{a_i}$ and $\tilde{b_i}$ satisfying the constraints
\begin{equation}
\begin{aligned}
\label{eq14}
\sum_{i=1}^d\tilde{a_i}=\sum_{i=1}^d\tilde{b_i}=0,
~~~~~
\quad \sum_{i=1}^d\abs{\tilde{a_i}}^2+\sum_{i=1}^d\abs{\tilde{b_i}}^2=\frac{1}{d}.
\end{aligned}
\end{equation}
Then the following equality holds,
\begin{equation}
\begin{aligned}
\label{eq15}
\max_{\overrightharp{a},\overrightharp{b}}(\abs{\tilde{a_1}+\tilde{b_1}}^2+\abs{\tilde{a_1}+\tilde{b_2}}^2)=\frac{3d-4}{d^2}.
\end{aligned}
\end{equation}
\qed
\end{proposition}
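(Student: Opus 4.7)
The plan is to reduce this complex $2d$-variable optimization to a two-real-variable problem by a short chain of standard simplifications, and finish with a single Cauchy-Schwarz. First I would apply global phase rotations to $\vec a$ and $\vec b$ (which preserve both constraints in \eqref{eq14}) to take $\tilde a_1=\alpha\geq 0$ real, and then use the parallelogram identity
\begin{equation*}
|\tilde a_1+\tilde b_1|^2+|\tilde a_1+\tilde b_2|^2 \;=\; 2\bigl|\tilde a_1 + \tfrac{\tilde b_1+\tilde b_2}{2}\bigr|^2 + \tfrac{1}{2}|\tilde b_1-\tilde b_2|^2
\end{equation*}
to rewrite the objective in terms of the mean $\beta=(\tilde b_1+\tilde b_2)/2$ and the half-difference $\gamma=(\tilde b_1-\tilde b_2)/2$. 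A further phase rotation of $\vec b$ lets us take $\beta\geq 0$ real and aligned with $\alpha$, reducing the objective to $2(\alpha+\beta)^2 + 2|\gamma|^2$.

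Next I would eliminate the tails by Cauchy-Schwarz. Subject to $\sum_{i\geq 2}\tilde a_i=-\alpha$, one has $\sum_{i\geq 2}|\tilde a_i|^2\geq \alpha^2/(d-1)$, and subject to $\sum_{i\geq 3}\tilde b_i=-2\beta$, one has $\sum_{i\geq 3}|\tilde b_i|^2\geq 4\beta^2/(d-2)$; both lower bounds are attained by taking the tail entries uniformly equal. Substituting into the $\ell^2$ constraint of \eqref{eq14} yields the tightened three-parameter inequality
\begin{equation*}
\frac{d\alpha^2}{d-1} + \frac{2d\beta^2}{d-2} + 2|\gamma|^2 \leq \frac{1}{d}.
\end{equation*}

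The remaining task is to maximize $2(\alpha+\beta)^2+2|\gamma|^2$ under this constraint. For $\gamma=0$ the constraint reads $\alpha^2/(d-1)+2\beta^2/(d-2)\leq 1/d^2$, and a single Cauchy-Schwarz pairing $(\sqrt{d-1},\sqrt{(d-2)/2})$ against $(\alpha/\sqrt{d-1},\beta\sqrt{2/(d-2)})$ gives $(\alpha+\beta)^2\leq \bigl((d-1)+(d-2)/2\bigr)/d^2=(3d-4)/(2d^2)$, so the objective is at most $(3d-4)/d^2$, with equality iff $\alpha/(d-1)=2\beta/(d-2)$. For $\gamma\neq 0$, homogeneity of the $\gamma=0$ bound in the reduced budget $1/d^2-2|\gamma|^2/d$ gives
\begin{equation*}
2(\alpha+\beta)^2 + 2|\gamma|^2 \leq \tfrac{3d-4}{d^2}\bigl(1-2d|\gamma|^2\bigr) + 2|\gamma|^2 = \tfrac{3d-4}{d^2} - \tfrac{4(d-2)}{d}|\gamma|^2,
\end{equation*}
which is strictly smaller than $(3d-4)/d^2$ for $d\geq 3$; so the optimum has $\gamma=0$, i.e.\ $\tilde b_1=\tilde b_2$.

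Finally I would verify attainability by exhibiting the explicit extremizer $\tilde a_1=\alpha$, $\tilde a_i=-\alpha/(d-1)$ for $i\geq 2$, $\tilde b_1=\tilde b_2=\beta$, $\tilde b_i=-2\beta/(d-2)$ for $i\geq 3$, where $\alpha,\beta\geq 0$ are determined by $\alpha/(d-1)=2\beta/(d-2)$ together with saturation of the $\ell^2$ constraint; these satisfy \eqref{eq14} and achieve $(3d-4)/d^2$, proving the claimed equality. The main potential obstacle is the apparent circularity in the $\gamma\neq 0$ comparison, but this is only apparent: the $\gamma=0$ bound is established unconditionally by the pure Cauchy-Schwarz step first, then reused in the $\gamma\neq 0$ scaling argument. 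A second minor subtlety is verifying that the uniform-tail construction can be chosen with arbitrary phases so that the alignments demanded by the phase-rotation reductions do not conflict with saturation of the tail inequalities, which is immediate since the forced-sum values are free to have any phase.
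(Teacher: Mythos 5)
Your proof is correct and complete. There is nothing in the paper to compare it against: Proposition \ref{pp:ab} is stated without proof and attributed to \cite{pphh2010}, so your argument supplies a self-contained derivation that the paper only cites. Checking it line by line: the parallelogram identity correctly rewrites the objective as $2|\tilde a_1+\beta|^2+2|\gamma|^2$ with $\beta=(\tilde b_1+\tilde b_2)/2$, $\gamma=(\tilde b_1-\tilde b_2)/2$; the phase rotations are legitimate since both constraints in \eqref{eq14} are invariant under a global phase on each vector; the tail bounds $\sum_{i\ge2}|\tilde a_i|^2\ge\alpha^2/(d-1)$ and $\sum_{i\ge3}|\tilde b_i|^2\ge4\beta^2/(d-2)$ together with $|\tilde b_1|^2+|\tilde b_2|^2=2\beta^2+2|\gamma|^2$ give exactly the tightened constraint $\frac{d\alpha^2}{d-1}+\frac{2d\beta^2}{d-2}+2|\gamma|^2\le\frac1d$; the weighted Cauchy--Schwarz with weights $(d-1)$ and $(d-2)/2$ yields $2(\alpha+\beta)^2\le\frac{3d-4}{d^2}$; and the $\gamma\ne0$ case produces the penalty $-\frac{4(d-2)}{d}|\gamma|^2<0$ for $d\ge3$, forcing $\tilde b_1=\tilde b_2$ at the optimum, with no circularity since the $\gamma=0$ bound is proved first and only its homogeneity in the budget is reused. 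Your extremizer specializes at $d=3$ to precisely the configuration the paper exhibits after Lemma \ref{le:d=3} ($\tilde a_1=\frac{4}{3\sqrt{10}}$, $\tilde a_2=\tilde a_3=-\frac{2}{3\sqrt{10}}$, $\tilde b_1=\tilde b_2=\frac{1}{3\sqrt{10}}$, $\tilde b_3=-\frac{2}{3\sqrt{10}}$), a reassuring consistency check. The only point worth making explicit in the final writeup is the logical direction of the tail-elimination step: the tightened inequality defines a relaxation of the true feasible set, so the maximum over it is an a priori upper bound, and attainability must then be verified against the original constraints \eqref{eq14} --- which your explicit extremizer does.
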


\renewcommand\refname{References}
\bibliographystyle{ieeetr} 
\bibliography{dis}

\begin{thebibliography}{10}

\bibitem{bernstein2009matrix}
D.~Bernstein, {\em Matrix Mathematics: Theory, Facts, and Formulas (Second
  Edition)}.
\newblock Princeton University Press, 2009.

\bibitem{pphh2010}
M.~H. Łukasz Pankowski, Marco~Piani and P.~Horodecki, ``A few steps more
  towards npt bound entanglement,'' {\em IEEE. Trans. Inf. Theory}, vol.~56,
  pp.~1--19, Aug 2010.

\bibitem{dss00}
D.~P. DiVincenzo, P.~W. Shor, J.~A. Smolin, B.~M. Terhal, and A.~V. Thapliyal,
  ``Evidence for bound entangled states with negative partial transpose,'' {\em
  Phys. Rev. A}, vol.~61, p.~062312, May 2000.

\bibitem{Terhal2000A}
B.~M. Terhal, ``A family of indecomposable positive linear maps based on
  entangled quantum states,'' {\em Linear Algebra and Its Applications},
  vol.~323, no.~1, pp.~61--73, 2000.

\bibitem{Poon2015Preservers}
E.~Poon, ``Preservers of maximally entangled states,'' {\em Linear Algebra and
  Its Applications}, vol.~468, no.~468, pp.~122--144, 2015.

\bibitem{Cariello2016A}
D.~Cariello, ``A gap for ppt entanglement,'' {\em Linear Algebra and Its
  Applications}, vol.~529, 2016.

\bibitem{Hou2013Linear}
J.~Hou and X.~Qi, ``Linear maps preserving separability of pure states,'' {\em
  Linear Algebra and Its Applications}, vol.~439, no.~5, pp.~1245--1257, 2013.

\bibitem{Alfsen2012Finding}
E.~Alfsen and F.~Shultz, ``Finding decompositions of a class of separable
  states,'' {\em Linear Algebra and Its Applications}, vol.~437, no.~10,
  pp.~2613--2629, 2012.

\bibitem{Woerdeman2004The}
H.~J. Woerdeman, ``The separability problem and normal completions,'' {\em
  Linear Algebra and Its Applications}, vol.~376, no.~1, pp.~85--95, 2004.

\bibitem{dcl00}
W.~D\"ur, J.~I. Cirac, M.~Lewenstein, and D.~Bru\ss{}, ``Distillability and
  partial transposition in bipartite systems,'' {\em Phys. Rev. A}, vol.~61,
  p.~062313, May 2000.

\bibitem{vd06}
R.~O. Vianna and A.~C. Doherty, ``Distillability of werner states using
  entanglement witnesses and robust semidefinite programs,'' {\em Phys. Rev.
  A}, vol.~74, p.~052306, Nov 2006.

\bibitem{br03}
S.~Bandyopadhyay and V.~Roychowdhury, ``Classes of $n$-copy undistillable
  quantum states with negative partial transposition,'' {\em Phys. Rev. A},
  vol.~68, p.~022319, Aug 2003.

\bibitem{cc08}
L.~Chen and Y.-X. Chen, ``Rank-three bipartite entangled states are
  distillable,'' {\em Phys. Rev. A}, vol.~78, p.~022318, Aug 2008.

\bibitem{hhh97}
M.~Horodecki, P.~Horodecki, and R.~Horodecki, ``Inseparable two spin-
  $\frac{1}{2}$ density matrices can be distilled to a singlet form,'' {\em
  Phys. Rev. Lett.}, vol.~78, pp.~574--577, Jan 1997.

\bibitem{rains1999}
E.~M. Rains, ``Bound on distillable entanglement,'' {\em Phys. Rev. A},
  vol.~60, p.~179, 1999.

\bibitem{cd11jpa}
L.~Chen and D.~Z. Djokovic‡, ``Distillability and ppt entanglement of low-rank
  quantum states,'' {\em Journal of Physics A: Mathematical and Theoretical},
  vol.~44, no.~28, p.~285303, 2011.

\bibitem{cd16pra}
L.~Chen and D.~Z. Djokovic‡, ``Distillability of non-positive-partial-transpose
  bipartite quantum states of rank four,'' {\em Phys. Rev. A}, vol.~94,
  p.~052318, Nov 2016.

\bibitem{hj85}
R.~Horn and C.~Johnson, {\em {Matrix Analysis}}.
\newblock Cambridge, England: Cambridge University Press, 1985.

\bibitem{peres1996}
A.~Peres, ``Separability criterion for density matrices,'' {\em Phys. Rev.
  Lett.}, vol.~77, p.~1413, 1996.

\bibitem{hhh96}
M.~{Horodecki}, P.~{Horodecki}, and R.~{Horodecki}, ``{Separability of mixed
  states: necessary and sufficient conditions},'' {\em Physics Letters A},
  vol.~223, pp.~1--8, Feb. 1996.

\bibitem{werner89}
R.~F. Werner, ``Quantum states with einstein-podolsky-rosen correlations
  admitting a hidden-variable model,'' {\em Physical Review A}, vol.~40,
  pp.~4277--4281, Oct 1989.

\bibitem{Hiroshima2008Bound}
T.~Hiroshima, ``Bound entangled states with non-positive partial transpose
  exist,'' {\em Physics}, 2008.

\bibitem{Hiroshima2012A}
T.~Hiroshima, ``A problem of existence of bound entangled states with
  non-positive partial transpose and the hilbert's 17th problem,'' {\em Eprint
  Arxiv}, 2012.

\end{thebibliography}
\end{document}